\newcolumntype{C}[1]{>{\centering\arraybackslash$}p{#1}<{$}}
\newcommand{\Rmnum}[1]{\expandafter\@slowromancap\romannumeral #1@}
\newtheorem{theorem}{Theorem}
\newtheorem{lemma}[theorem]{Lemma}
\newcommand{\multiline}[1]{%
  \begin{tabularx}{\dimexpr\linewidth-\ALG@thistlm}[t]{@{}X@{}}
    #1
  \end{tabularx}
}
\begin{document}
\title{{Deep Reinforcement Learning for Collaborative Edge Computing in Vehicular Networks}}
\author{Mushu~Li,~\IEEEmembership{Student~Member,~IEEE,}
        Jie~Gao,~\IEEEmembership{Member,~IEEE,}
        Lian~Zhao,~\IEEEmembership{Senior~Member,~IEEE,}
        and Xuemin~(Sherman)~Shen,~\IEEEmembership{Fellow,~IEEE} \thanks{
        
        Mushu Li, Jie Gao, and Xuemin (Sherman) Shen are with the Department of Electrical and Computer Engineering, University of Waterloo, Waterloo, ON, Canada, N2L 3G1 (email:\{m475li, jie.gao, sshen\}@uwaterloo.ca).

Lian Zhao is with the Department of Electrical, Computer and Biomedical Engineering, Ryerson University, Toronto, ON, Canada, M5B 2K3 (email:l5zhao@ryerson.ca).}

}%

\maketitle
\thispagestyle{empty}
\begin{abstract}
Mobile edge computing (MEC) is a promising technology to support mission-critical vehicular applications, such as intelligent path planning and safety applications. 
In this paper,
a collaborative edge computing framework is developed to reduce the computing service latency and improve service reliability for vehicular networks.
First, a task partition and scheduling algorithm (TPSA) is proposed to decide the workload allocation and schedule the execution order of the tasks offloaded to the edge servers given a computation offloading strategy. Second, an artificial intelligence (AI) based collaborative computing approach is developed to determine the task offloading, computing, and result delivery policy for vehicles. Specifically, the offloading and computing problem is formulated as a Markov decision process. A deep reinforcement learning technique, \textit{i.e.}, deep deterministic policy gradient, is adopted to find
the optimal solution in a complex urban transportation network. By our approach, the service cost,
which includes computing service latency and service failure penalty, can be minimized via the optimal workload assignment and server selection in collaborative computing. Simulation results show that the proposed AI-based collaborative computing approach can adapt to a highly dynamic environment with outstanding performance.

\end{abstract}
\begin{IEEEkeywords} 
Mobile edge computing, Internet of Vehicles, task scheduling, deep deterministic policy gradient
\end{IEEEkeywords}

\section{Introduction}

Vehicular communication networks have drawn significant attention from both academia and industry in the past decade. Conventional vehicular networks aim to improve the driving experience and enable safety applications via data exchange in vehicle-to-everything (V2X) communications. 
In the era of 5G, the concept of vehicular networks has been extended to Internet-of-Vehicle (IoV), in which intelligent and interactive applications are enabled by communication and computation technologies \cite{Cheng_v1}. A myriad of on-board applications can be implemented in the context of IoV, such as assisted/autonomous driving and platooning, urban traffic management, and on-board infotainment services~\cite{Peng_v1,Gao}.

Although IoV technologies are promising, realizing the IoV applications still faces challenges. One of the obstacles is the limited on-board computation capability at vehicles. For example, a self-driving car with ten high-resolution cameras may generate 2 gigapixels per second of data, while 250 trillion computation operations per second are required to process the data promptly \cite{Nvidia}.
Processing such computation-intensive applications on vehicular terminals is energy-inefficient and time-consuming. To overcome the limitation, mobile edge computing (MEC) is an emerging paradigm that provides fast and energy-efficient computing services for vehicle users \cite{Zhang2, Liu, Ning}. 
Via vehicle-to-infrastructure (V2I) communications, resource-constrained vehicle users are allowed to offload their computation-intensive tasks to highly capable edge servers co-located with roadside units (RSUs) for processing. 
Meanwhile, compared to the conventional mobile cloud computing, the network delay caused by task offloading can be significantly reduced in MEC due to the proximity of the edge server to vehicles \cite{cong1}. 
Consequently, some applications that require high computing capability, such as path navigation, video stream analytics, and objective detection, can be implemented in vehicular networks with edge servers \cite{Zhang_proc}.

Despite the advantage brought by MEC-enabled vehicular networks, new challenges have emerged in task offloading and computing. One critical problem in MEC is to decide which edge servers should their computing tasks be offloaded to. In vehicular networks, the highly dynamic communication topology leads to unreliable communication links \cite{Lyu1}. Due to the non-negligible computing time and the limited communication range of vehicles, a vehicle may travel out of the coverage area of an edge server during a service session, resulting in a service disruption. To support reliable computing services for  high-mobility users, a service migration scheme has been introduced in~\cite{Taleb}. Under this scope, when a user moves out of the communication area of the edge that the computing task was offloaded, the computing process will be interrupted, and the corresponding virtual machine (VM) will be migrated to a new edge according to the radio association.
%, while the cost of service interruption and network overhead for virtual machine migration cannot be avoided \cite{Farhadi}. 
In the urban area, where highly dense infrastructure are deployed, frequent service interruption would happen due to the dynamically changing radio association, which can significantly increase the overall computing service latency.

Alternatively, computing service reliability can be achieved by cooperation among edge servers. 
{Different from service migration, which achieves service reliability by migrating the computing service according to the vehicle's trajectory, service cooperation improves the service reliability by accelerating task processing time. The computing task can be divided and computed by multiple servers in parallel or fully offloaded to a server with high computing capability at the cost of communication overhead \cite{Cao,Lin}. In this regard, the computing task can be forwarded to the edge server which is out of the user's communication range. Compared to service migration, in which edge servers only execute the task offloaded by the vehicles under their communication coverage, service cooperation allows edge servers processing the tasks offloaded by the vehicles out of their coverage for reducing the overall computing time. }
%does not require a direct communication link between the edge and vehicles
%in service cooperation, edge servers 
%may execute the computing tasks offloaded by vehicles which is .
%Compared to service migration, in which the computation association of a vehicle is consistent with its communication association, in service cooperation, the radio association of the vehicle and the computation association of its computing task are evaluated separately.}
Nevertheless, multi-hop communications could result in significant transmission delay and waste communication spectrum resources in the task offloading process. The tradeoff between the communication overhead and the computing capability increases the complexity of the server assignment problem.  In addition, although computing service latency can be reduced by cooperative computing, it is hard to guarantee  service reliability for the vehicles with high mobility. The uncertainty of vehicle moving trajectories poses significant challenges in computing result delivery.

Motivated by the issues in the existing service migration and computing cooperation schemes, we present a computing collaboration framework to provide reliable low-latency computing in an MEC-enabled vehicular network. Once an edge server receives the computing tasks offloaded by a vehicle, it may partially or fully distribute the computing workload to another edge server to reduce computing latency. Furthermore, by selecting proper edge servers to deliver the computing results, vehicle users are able to obtain computing results without service disruption caused by mobility. 
Under this framework, we propose a novel task offloading and computing approach that reduces the overall computing service latency and improves service reliability. To achieve this objective, we firstly formulate a task partition and scheduling optimization problem, which allows all received tasks in the network to be executed with minimized latency given the offloading strategy. A heuristic task partition and scheduling approach is developed to obtain a near-optimal solution of the non-convex integer problem. {In addition, we formulate the radio and computing association problem into a Markov decision process (MDP). By characterizing stochastic state transitions in the network, MDP is able to provide proactive offloading policy for vehicles. An artificial intelligence (AI) approach, deep reinforcement learning (DRL), is adopted to cope with the curse of dimensionality in MDP and unknown network state transitions caused by vehicle mobility.} Specifically, a convolutional neural network (CNN) based DRL is developed to handle the high-dimensional state space, and the deep deterministic policy gradient (DDPG) algorithm is adopted to handle the high-dimensional action space in the proposed problem.
The major contributions of this paper are:

\begin{enumerate}
\item We develop an efficient collaborative computing framework for MEC-enabled vehicular networks to provide low-latency and reliable computing services. 
To overcome the complexity brought by the dynamic network topology, we propose a location-aware task offloading and computing strategy to guide MEC server collaboration.
%Compared to the existing approaches, our method address the computing reliability during MEC server selection while reducing overall service time.
\item We devise a task partition and scheduling scheme to divide the computing workload among edge servers and coordinate the execution order for tasks offloaded to the servers. Given the offloading strategy, our scheme can minimize the computing time by finding a near-optimal task scheduling solution with low time-complexity.

\item We further propose an AI-based collaborative computing approach, which utilizes a model-free method to find the optimal offloading strategy and MEC server assignment in a 2-dimensional transportation system. A CNN based DDPG technique is developed to capture the correlation of the state and action among different zones and accelerate the learning speed. 
\end{enumerate}

The remainder of the paper is organized as follows. In Section II, we present the related works. Section~III describes the system model. Section IV formulates the service delay minimization problem. In Section V, we present the task partition and scheduling scheme, followed by an AI-based collaborative computing approach in Section VI. Section VII presents simulation results, and Section VIII concludes the paper.

\section{Related Works}
\subsection{Mobile Edge Computing}
As proposed by ETSI in \cite{WP}, the main objective of MEC is to reduce the computing task offloading and computing latency via utilizing the computing resources located in edge devices, such as base stations and access points. In the context of edge computing, one of the main problems is to determine the computing task offloading mechanisms. The edge server selection problem has been evaluated in \cite{Cheng} and \cite{Liu2}. In \cite{Cheng}, Cheng \textit{et al.} propose a user association strategy to jointly minimize the computing delay, user energy consumption, and the server computing cost under a space-air-ground integrated network. 
A model-free approach is proposed in the work to deal with the complex offloading decision-making problem.  In \cite{Liu}, Liu \textit{et al.} investigate the user-server association policy, which takes into account the communication link quality and server computing capability. 
%The main objective is to minimize the computing latency while stabilizing the processing queue in the edge servers. 
In both works, the computing association follows the radio association, \textit{i.e.}, the computing task is processed within the edge server that the task is offloaded. 
{To further reduce the computation time, task partition has been considered in \cite{Chen2,You,Li}. Computing tasks can be split and computed by multiple servers in parallel. The cooperation computing has been investigated among the works \cite{Chen2,You,Li} under different network environments, while the impact of user mobility has not been addressed.  
Additionally, in \cite{Cao, Alameddine}, and \cite{Feng_v1}, task scheduling, \textit{i.e.}, ordering the task execution sequences, is also evaluated in the offloading decision making process. In those works, the task scheduling problem is formulated into a mixed-integer programming problem, and heuristic algorithms are proposed to obtain near-optimal solutions efficiently.} Different from the above works, we investigate the task partition and scheduling under the collaborative computing framework, in which the adjustment on workload allocation for a task can affect the performance of other tasks, which makes the problem more complex.

\subsection{MEC-enabled Vehicular Networks}
The problem of computing offloading has been investigated in many research works in the context of vehicular networks \cite{He,Qi,Wang,Li3}. In those works, the main objective is to minimize service time by selecting the optimal edge server, while service reliability in the presence of vehicle mobility is not taken into account. 
In \cite{Sun,Ning2,Hafeez}, machine learning techniques are adopted to obtain the reliable offloading decision for vehicles via predict the vehicle trajectories. In \cite{Sun}, Sun \textit{et al.} focus on task offloading and execution utilizing the computing resources on vehicles, \textit{i.e.} vehicular edge. An online learning algorithm, \textit{i.e.}, multi-armed bandit, is utilized to determine the computing and communication association among vehicles. In \cite{Ning2}, Ning \textit{et al.} apply a DRL approach to jointly allocate the communication, caching, and computing resources in the dynamic vehicular network.
Furthermore, to deal with service disruption when the vehicle leaving the server converge, service migration has been firstly proposed in \cite{Taleb}. According to the vehicle moving trajectory, the corresponding computing services can be migrated to another edge server that may associate the vehicle in the future. 
The proactive service migration strategy has been investigated in \cite{Wang2} and \cite{Liang}, where MDP is utilized to make the migration decision in a proactive manner.  
{To alleviate service interruption and network overheads in virtual machine migration, server cooperation has been studied in \cite{Zhang2, Zhou_v1}, and \cite{Xiao_v1}. The works \cite{Zhang2} and \cite{Zhou_v1} consider that vehicles divide and offload the computing tasks to multiple servers according to the predicted traveling traces. Vehicle-to-vehicle communication is used to disseminate the computing result if the edge server cannot connect with the vehicle at the end of a service session. In \cite{Xiao_v1}, the work utilizes neural networks to predict the computing demand in the vehicular network. MEC servers are clustered to compute the offloaded tasks cooperatively. } 
Different from the above works, our proposed approach achieves service reliability improvement by collaboration and task scheduling among edge servers without cooperative transmission, which reduces the communication overhead of result delivery.

%Our previous work \cite{Li2} proposes a model-free approach to determine task offloading and computing collaboration strategy under a highway scenario. Motivated by the work, in this paper, we further consider an urban scenario, in which a 2-dimension transportation network is evaluated such that the network becomes more complex. We also consider the task partition and scheduling scheme to further reduce the computing service latency and improve service reliability.

\section{System Model}
\subsection{Collaborative Edge Computing Framework}

\begin{figure*}[t]  
  \centering  
  \includegraphics[width=150mm]{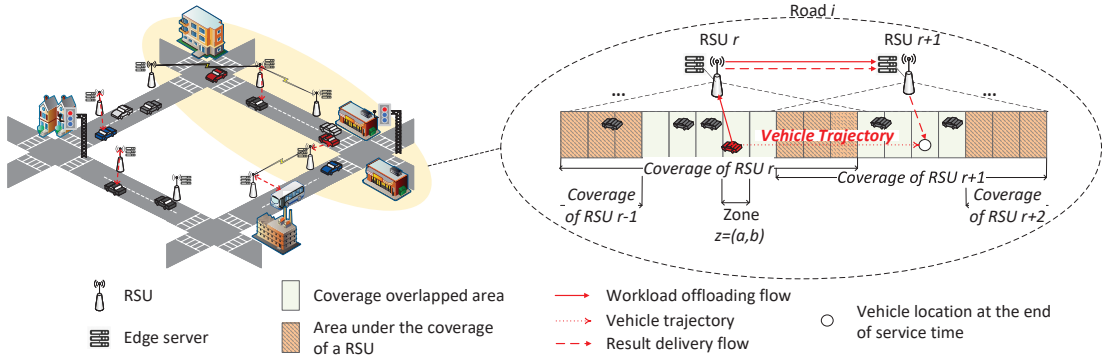}\\
  \caption{Network model. }  
  \label{fig:sm} 
\end{figure*}
An MEC-enabled vehicular network is illustrated in Fig. \ref{fig:sm}. 
A row of RSUs, equipped with computing resources, provide seamless communication and computing service coverage for vehicles on the road. 
An RSU can also communicate with other RSUs within its communication range via wireless links. The set of RSUs is denoted by $\mathcal{R}$, where the index of RSUs is denoted by $r\in \mathcal{R}$. 
%Considering 2-dimension Cartesian coordinate system, the location of RSU $i$ is denoted by $(x_r, y_r)$. 
We assume that a global controller has full knowledge of the transportation network and makes offloading and computing decisions for all the vehicles in a centralized manner. In our model, a computing session for a task includes three steps:
\begin{enumerate}
\item Offloading: When a computing task is generated at a vehicle, the vehicle selects an RSU, which is under its communication range, and offloads the computing data of the task to the RSU immediately. In the example shown in Fig. \ref{fig:sm}, RSU $r$ is selected to offload the computing load. Such RSU is referred to as the \textit{receiver RSU} for the task. 
\item Computing: After the computing task is fully offloaded, the receiver RSU can process the whole computing task or select another RSU to share the computing load.  
The RSU, which is selected to process the task collaboratively with the receiver RSU, is referred to as the \textit{helper RSU} for the task. 
\item Delivering: A vehicle may travel out of the communication range of its receiver RSU. Therefore, the controller may select an RSU, which could connect with the vehicle at the end of service session, to gather and transmit computing results. The RSU is referred to as the \textit{deliver RSU}.
To reduce the overhead, we limit the deliver RSU to be either the receiver RSU or the helper RSU of the task. In the example shown in Fig. \ref{fig:sm}, RSU $r+1$ behaves as both the helper RSU and the deliver RSU for the computing task offloaded by the vehicle.
\end{enumerate}

To reduce the decision space in task offloading and scheduling, instead of providing the offloading and computing policy to
individual vehicles, we consider location-based offloading and
computing policy. We divide each road into several zones with equal length, where the set of zones is denoted by $\mathcal{Z}$.
The index of the zones is denoted by $z = (a,b) \in \mathcal{Z}$. The terms $a$ and $b$ represent the index of the roads and the index of the segments on the road, respectively, where $a\in \{1, \dots, A\}$, and $b\in \{1, \dots, B\}$.  As the vehicle drives through the road, it traverses the zones consecutively. We assume that all vehicles in the same zone follow the same offloading and computing policy.\footnote{The accuracy of vehicle locations will be improved when the length of the zone is reduced. In consideration of the length of a car, the length of a zone is larger than 5 m.} For simplicity, we evaluate the aggregated tasks for vehicles in each zone at a time slot, and refer to the tasks offloaded by zone $z$ as task $z$ in the remainder of the paper.
We suppose that the vehicle will not travel out of a zone during the time duration of a time slot, and vehicles can complete the offloading process of a task generated in a zone before it travels out of the zone.
Denote the set of vehicles in zone $z$ and time slot $t\in \mathcal{T}$ as $\mathcal{V}_{z,t}$. The offloading decision for vehicles in zone $z$ and time slot $t$ is represented by a
vector $\boldsymbol{\alpha}_{z,t} \in \mathbb{Z}_+^{|\mathcal{R}|}$, where $\sum_{r = 1}^{|\mathcal{R}|} \alpha_{z,r,t} = 1$.  The element $\alpha_{z,r,t}$ is 1 if RSU $r$ is selected as the receiver RSU for the vehicles in zone $z$ and time slot $t$, and 0 otherwise. Similarly, the collaborative computing decision for vehicles in zone $z$ and time slot $t$ is represented by a vector $\boldsymbol{\beta}_{z,t} \in \mathbb{Z}_+^{|\mathcal{R}|}$, where $\sum_{r = 1}^{|\mathcal{R}|} \beta_{z,r,t} = 1$. The element $\beta_{z,r,t}$ is 1 if RSU $r$ is selected as the helper RSU for the vehicles in zone $z$ and time slot $t$, and 0 otherwise. In addition, the decision on result delivery is denoted by a binary variable $\gamma_{z,r,t}$, where $\gamma_{z,r,t}$ is 0 if the computing results are delivered by RSU $r$ for task $z$ in time slot $t$, and $\gamma_{z,r,t}$ is 1 if the computing results are delivered by RSU $r$.

\subsection{Cost Model}
In this paper, the system cost includes two parts: the service delay and the penalty caused by service failure.
\subsubsection{Service Delay}

\begin{figure}[t]  
  \centering  
  \includegraphics[width=90mm]{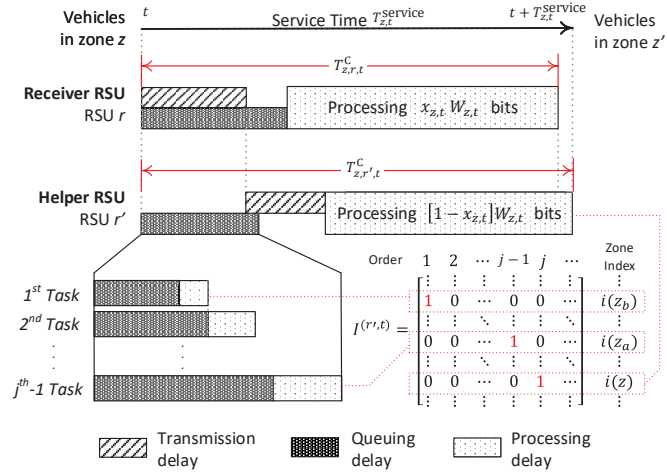}\\
  \caption{An example of the task offloading and computing process. }  
  \label{fig:sm2} 
\end{figure}

We adopt the task partition technique during task processing. Once a receiver RSU receives the offloaded task from vehicles in a zone, it immediately divides the task and offloads a part of the workload to the helper RSU of the corresponding zone. We denote the computing delay of task $z$ corresponding to the receiver or helper RSU $r$ in time slot $t$ as $T^{\textrm{C}}_{z,r,t}$. As shown in Fig. \ref{fig:sm2}, the computing delay includes task offloading delay, queuing delay, and processing delay. Since the amount of output data is usually much smaller compared to the amount of input data, we neglect the transmission delay in result delivery \cite{Wang,Li}.

Firstly, task offloading comprises two steps: offloading tasks from vehicles to their receiver RSU and offloading the partial workload from the receiver RSU to the helper RSU. 
According to the propagation model in 3GPP standards \cite{Chen}, the path loss between a transmitter and a receiver with distance $d$ (km) can be computed as:
\begin{align}
&L(d) =  40 (1-4\times 10^{-3}D^{hb})\log_{10}d -  18\log_{10}D^{hb}\\ \notag
&\hspace{1cm} +21\log_{10}f + 80 \textrm{ (dB)},
\end{align}
where the parameter $f$ is the carrier frequency in MHz, and the parameter $D^{hb}$ represents the antenna height in meter. We do not consider the shadowing effect of the channel.
%$X$ represents the shadowing fading of the channel, which follows the log-normal distribution.
Denote the distance between the center point of zone $z$ and the location of RSU $r$ as $D_{z,r}$, and the distance between RSU $r$ and $r'$ as $D_{r,r'}$. The data rate for vehicles in zone $z$ offloading task to RSU $r$ is
\begin{equation}
r_{z,r} = B^{\textrm{Z}} \log_2 \Big(1+\frac{P^\textrm{V}10^{-L(D_{z,r})/10}}{\sigma_v^2}\Big),
\end{equation}
where the parameter {$\sigma_v^2$} denotes the power of the Gaussian noise in the V2I channel, $P^\textrm{V}$ represents the vehicle transmit power, and $B^\textrm{Z}$ represents the bandwidth reserved for vehicles in a zone. As the  receiver RSU for task $z$, a signal-to-noise ratio threshold should be satisfied, where
\begin{align}
\frac{P^V10^{-L(D_{z,r})/10}}{\sigma_v^2}&\geq \alpha_{z,r,t}\delta^{\textrm{O}}, \forall t, z, r, \label{c1}
\end{align}
where $\delta^{\textrm{O}}$ is the signal-to-noise ratio threshold for data offloading.
Assume that vehicles in a zone are scheduled to offload the tasks successively, and the channel condition is fixed in the duration of any computing task offloading. 
The transmission delay for offloading the computing data in zone $z$ to the receiver RSU is:
\begin{equation}
T^{\textrm{T}}_{z,t} = \sum_{r \in \mathcal{R}} \frac{\alpha_{z,r,t}W_{z,t}}{r_{z,r}},
\end{equation} 
where $W_{z,t}$ represents the overall computing data generated by vehicles in zone $z$, \textit{i.e.}, task $z$, and time slot $t$.
In addition, the data rate between RSU $r$ and RSU $r'$ for forwarding the computing data offloaded from a zone is
\begin{equation}
r_{r,r'} = B^\textrm{R} \log_2\Big(1+\frac{P^\textrm{R} 10^{-L(D_{r,r'})/10}}{\sigma_r^2}\Big),
\end{equation} 
where the parameter {$\sigma_r^2$} represents the power of the Gaussian noise in the RSU to RSU channel, $P^\textrm{R}$ represents the RSU transmit power, and $B^\textrm{R}$ represents the bandwidth reserved for forwarding data offloaded from a zone. In data forwarding, the signal-to-noise constraint is also required to be satisfied, where
\begin{align}
\frac{P^R 10^{-L(D_{r,r'})/10}}{\sigma_r^2}&\geq \beta_{z,r',t}\delta^{\textrm{O}}, \forall t, z, r, r'. \label{c2}
\end{align}
For computing task $z$ in time slot $t$, the portion of workload to be processed by the receiver RSU and the helper RSU is denoted by $x_{z,t}$ and $1-x_{z,t}$, respectively. Thus, the delay for forwarding the data to the deliver RSU is:
\begin{equation}
T^{\textrm{R}}_{z,t} = \sum_{r\in \mathcal{R}} \sum_{r'\in \mathcal{R}}\frac{\alpha_{z,r,t}\beta_{z,r',t}(1-x_{z,t})W_{z,t}}{r_{r,r'}}.
\end{equation}

Furthermore, after the task is offloaded to edge servers, the queuing delay may be experienced. Let set $\mathcal{Z}^{r,t}$ denote the zones which have tasks offloaded to RSU $r$, \textit{i.e.}, $\{z|\alpha_{z,r,t} = 1\} \cup \{z|\beta_{z,r,t} = 1\}$, and let $i(z)$ represent the index of zone $z$ in set $\mathcal{Z}^{r,t}$. We denote $N_{r,t}$ as the number of tasks offloaded in time slot $t$ and assigned to the RSU $r$, where $N_{r,t} = \sum_z \alpha_{z,r,t}+\beta_{z,r,t}$. 
Then, a matrix, $\mathbb{I}^{(r,t)} \in \mathbb{Z}_+^{N_{r,t} \times N_{r,t}}$, can be defined to imply the processing order of tasks offloaded to RSU $r$ in time slot $t$, where $I^{(r,t)}_{i(z),j} = 1$ if the task offloaded from zone $z$ is scheduled as the $j$-th task to be processed among the other tasks offloaded in the same time slot. As shown in Fig.~\ref{fig:sm2}, the queuing delay of a task depends on the computing time of the task scheduled priorly.  
For the first task to be processed among the tasks offloaded in time slot $t$, the queuing delay stems from the computing time for the tasks offloaded in previous time slots. Thus, the queuing delay of task $z$ in RSU $r$ can be formulated as follows:
\begin{equation}
T^{\textrm{Q}}_{z,r,t} = 
\begin{cases} 
T^{\textrm{Q0}}_{r,t}, \textrm{ if } I^{(r,t)}_{i(z),1} = 1,  \\
\sum_{z'} \sum_j I^{(r,t)}_{i(z),j} I^{(r,t)}_{i(z'),j-1} T^{\textrm{C}}_{z',r,t}, \textrm{ otherwise.} 
\end{cases}
\end{equation}
The term $T^{\textrm{Q0}}_{r,t}$ represents the latency for finishing the tasks offloaded in previous time slots $\{1, \dots, t-1\}$, where 
\begin{equation}
T^{\textrm{Q0}}_{r,t} = \max \big\{ \sum_{z'} I^{(r,t)}_{i(z'),N_{r,t-1}}T^{\textrm{C}}_{z',r,t-1} - \epsilon, 0 \big\},
\end{equation}
where $\epsilon$ is the length of a time slot.

We consider that data transmission and task processing run in parallel. After the task is offloaded and other tasks scheduled priorly are completed, the task can be processed by the dedicated server. The delay for processing task $z$ offloaded to RSU $r$ in time slot $t$ can be formulated as
\begin{equation}
T^{\textrm{P}}_{z,r,t} = \frac{\chi W_{z,t}[\alpha_{z,r,t}x_{z,t}+\beta_{z,r,t}(1-x_{z,t})]}{C_r},
\end{equation} 
where $C_r$ denotes the computing capability (CPU-cycle frequency) of RSU $r$, and $\chi$ denotes the number of computation cycles needed to execute 1 bit of data. 

Given the offloading delay, queuing delay, and processing delay, the computing delay for task $z$ on RSU $r$ can be formulated as follows:
\begin{equation}
T^{\textrm{C}}_{z,r,t} = \max\{T^{\textrm{T}}_{z,t}+ \beta_{z,r,t}T^{\textrm{R}}_{z,t}, T^{\textrm{Q}}_{z,r,t} \} + T^{\textrm{P}}_{z,r,t}.
\label{eq.ser}
\end{equation}
Denote the overall service delay for the task offloaded from zone $z$ in time slot $t$ as $T^{\textrm{service}}_{z,t}$. As shown in Fig. \ref{fig:sm2}, the overall service delay depends on the longest computing time between the receiver RSU and the helper RSU.  
Thus, we have
\begin{equation}
T^{\textrm{service}}_{z,t}=\max \{\sum_r \alpha_{z,r,t} T^{\textrm{C}}_{z,r,t}, \sum_r \beta_{z,r,t}T^{\textrm{C}}_{z,r,t}\}.
\label{eq.service}
\end{equation}

\subsubsection{Service Failure Penalty}
The mobility of vehicles brings uncertainty in result downloading. Service failure may occur if a vehicle is out of the coverage of its deliver RSU during the service session. 
{Denote the zone that vehicle $v$ is located when its computing result is delivered as $m_v$, \textit{i.e.}, the location of vehicle $v\in \mathcal{V}_{z,t}$ in time slot $t+ T_{z,t}^{\textrm{Service}}$.}
Also, we denote the signal-to-noise ratio threshold for result delivering as $\delta^{\textrm{D}}$. We introduce a variable $\textbf{1}_{z,t}$ to indicate whether the computing service for task $z$ offloaded in time slot $t$ is successful or not, where
\begin{equation}
\textbf{1}_{z,t} = 
\begin{cases}
1, \textrm{ if }{P^R10^{-L(D_{m_v,r})/10}} \geq {\sigma_r^2}\gamma_{z,r,t}\delta^{\textrm{D}}, \forall v \in \mathcal{V}_{z,t}\\
0, \textrm{ otherwise. }
\end{cases}
\end{equation}
%where $w$ represents the index of deliver RSU for the tasks offloaded in zone $z$ and time slot $t$, \textit{i.e.}, $\gamma_{z,w,t} = 1$.

\section{Problem Formulation}
Our objective is to minimize the weighted sum of the overall computing service delay for vehicle users and service failure penalty. The corresponding objective function can be formulated as follows:

\begin{subequations}
\label{problem}
\begin{align}
\min_{\substack{
\{\boldsymbol{\alpha}, \boldsymbol{\beta}, \boldsymbol{\gamma},\mathbf{x},\\ \{\mathbf{I}^{(r,t)}, \forall r,t\}\}}} &\lim_{T\to \infty} \frac{1}{T} \sum_{t=0}^{T-1}\sum_{z \in \mathcal{Z}}\Big\{T^{\textrm{service}}_{z,t}\textbf{1}_{z,t}+ \lambda W_{z,t}(1-\textbf{1}_{z,t})\Big\} \\
\textrm{s.t.\;\;} & (\ref{c1}),(\ref{c2}),\\
& \sum_{r \in \mathcal{R}} \alpha_{z,r,t} = 1, \sum_{r \in \mathcal{R}} \beta_{z,r,t} = 1, \sum_{r \in \mathcal{R}} \gamma_{z,r,t} = 1 \label{cc1}\\
& \sum_{i = 1}^{N_{r,t}}I^{(r,t)}_{i,j} = 1, \sum_{j = 1}^{N_{r,t}}I^{(r,t)} _{i,j} = 1\label{cc2}\\
&0 \leq x_{z,t}\leq 1,\label{cc3}\\
& \boldsymbol{\alpha}_{z,t}, \boldsymbol{\beta}_{z,t} \in \mathbb{Z}_+^{|\mathcal{R}|}, \label{cc4}\\
& \boldsymbol{I}^{(r,t)}\in \mathbb{Z}_+^{N_{r,t}\times N_{r,t}} \label{cc5},
\end{align}
\end{subequations}
where $\lambda$ represents per-unit penalty,
for the case when the computing offloading service fails.
\begin{figure*}[t]
\scriptsize{
\begin{equation}
\hat{x}_{z} = 
\begin{cases}
\frac{T_{z,h(z)}^\textrm{Q}-\max\{T_{z,r(z)}^\textrm{Q}, T_{z}^\textrm{T}\} + \chi W_z/C_{h(z)}}{\chi W_z/C_{r(z)}+\chi W_z/C_{h(z)}}, \textrm{ if } T_{z,h(z)}^\textrm{Q}-\max\{T_{z,r(z)}^\textrm{Q}, T_{z}^\textrm{T}\} \geq \frac{\chi W_z}{C_{r(z)}}-\chi R_{r(z),h(z)} (T_{z,h(z)}^\textrm{Q}-T_{z}^\textrm{T})(\frac{1 }{C_{r(z)}}+\frac{1 }{C_{h(z)}})\\
\frac{T_{z}^\textrm{T}-\max\{T_{z,r(z)}^\textrm{Q}, T_{z}^\textrm{T}\} + \chi W_z/C_{h(z)} + W_z/R_{r(z),h(z)}}{\chi W_z/C_{r(z)}+\chi W_z/C_{h(z)}+ W_z/R_{r(z),h(z)}},\textrm{ otherwise. }
\end{cases}
\label{eq.tp}
\end{equation}}
\noindent\rule{18cm}{0.4pt}
\end{figure*} 
The optimization variables include three aspects: edge server selection, \textit{i.e.}, $\{\boldsymbol{\alpha}, \boldsymbol{\beta}, \boldsymbol{\gamma}\}$, task partition, \textit{i.e.}, $\mathbf{x}$, and task scheduling, \textit{i.e.}, $\{\mathbf{I}^{(r,t)}, \forall r,t\}$. 
It can be seen that Problem \eqref{problem} is a mixed-integer nonlinear optimization problem. 
Solving the above problem directly by conventional optimization methods is challenging. 
%In our previous work, we provide a model-free solution to deal with the edge server selection problem under a simplified system model. However, applying the solution in \cite{Li2} directly to solve Problem \eqref{problem} is impractical due to the extremely large decision spaces in task scheduling under a 2-D vehicular network. 
{Furthermore, the decision dimension of the problem is too large to apply model-free techniques directly. Taking the variable of task execution order as an example, \textit{i.e.}, $\mathbf{I}^{(r,t)}$, there are $N_{r,t} \times N_{r,t}$ number of decisions to be determined for a server in a time slot. The number of combinations of scheduling decisions is at least $(|\mathcal{Z}|/|\mathcal{R}|)! \times |\mathcal{R}| \times |\mathcal{T}|$, in which tasks are evenly assigned to servers and each task is processed by only one server. Thus, to reduce the decision dimension of the problem, we divide Problem \eqref{problem} into two sub-problems: i) task partition and scheduling problem, and ii) edge server selection problem.} In the task partition and scheduling problem, we aim to obtain the optimal task partition ratio and the execution order to minimize the computing latency given the offloading policy $\{\boldsymbol{\alpha}, \boldsymbol{\beta}\}$. 
After that, we re-formulate the edge server selection problem as an MDP and utilize the DRL technique to obtain the optimal offloading and computing policy.

\section{Task Partition and Scheduling}

Multiple tasks offloaded from different zones can be received by an edge server in a time slot. The computing tasks can only be processed if the tasks scheduled priorly are executed. As a result, the overall computing time may vary depending on the task execution order in edge servers. In addition, the workload of a task can be divided and offloaded to two edge servers, \textit{i.e.}, receiver and helper RSUs. Workload allocation for a task also affects the overall service time. Therefore, we study task partition and scheduling to minimize the service latency given the offloading policy $
\{\boldsymbol{\alpha}, \boldsymbol{\beta}\}$. 
Based on Problem \eqref{problem}, the delay minimization problem can be formulated as follows:
\begin{subequations}
\label{problem2}
\begin{align}
\min_{\mathbf{x}, \{\mathbf{I}^{(r,t)}, \forall r,t\}} &\sum_{z \in \mathcal{Z}}T^{\textrm{service}}_{z,t} \\
\textrm{s.t.\;\;} & (\ref{cc2}),(\ref{cc3}),(\ref{cc5}).
\end{align}
\end{subequations}
Problem \eqref{problem2} is a mixed-integer programming, which involves a continuous variable $\mathbf{x}$ and an integer matrix variable $\{\mathbf{I}^{(r,t)}, \forall r,t\}$. Moreover, even if $\mathbf{x}$ is known, the remaining integer problem is a variation of the traveling salesman problem, which is an NP-hard problem. To reduce the time-complexity in problem-solving, we exploit the properties of task partition and scheduling and develop a heuristic algorithm to obtain an approximate result efficiently.
To simplify the notations, we eliminate the time index $t$ in the remainder of the section since we consider the scheduling scheme for the tasks offloaded in one time slot. 
We further denote $r(z)$ and $h(z)$ as the index of receiver and helper RSUs for task $z$, respectively. 
\begin{lemma}
\label{le.1}
If no task is queued after task $z$ for both the receiver RSU and the helper RSU, the optimal partition ratio for the task $x_z^*$ is $\min\{\max\{0, \hat{x}_z\},1\}$, where $\hat{x}_z$ can be determined by Eq. \eqref{eq.tp}.
\end{lemma}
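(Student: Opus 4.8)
The plan is to reduce the single-task scheduling problem to an unconstrained optimization in the scalar variable $x_z \in [0,1]$ and then analyze it piecewise. First I would specialize Eq. \eqref{eq.ser} and Eq. \eqref{eq.service} to the situation in the lemma: since no task is queued after task $z$ on either RSU, the value $x_z$ affects only the service delay $T^{\textrm{service}}_z$ of task $z$ itself (it changes neither any $T^{\textrm{Q}}$ term of a later task nor the queue-tail terms feeding the next time slot), so minimizing $\sum_{z'} T^{\textrm{service}}_{z'}$ over $x_z$ is equivalent to minimizing $T^{\textrm{service}}_z$ alone. Dropping the time index, on the receiver RSU the computing delay is $T^{\textrm{C}}_{z,r(z)} = \max\{T^{\textrm{T}}_z,\, T^{\textrm{Q}}_{z,r(z)}\} + \chi W_z x_z / C_{r(z)}$, and on the helper RSU it is $T^{\textrm{C}}_{z,h(z)} = \max\{T^{\textrm{T}}_z + T^{\textrm{R}}_z,\, T^{\textrm{Q}}_{z,h(z)}\} + \chi W_z (1-x_z)/C_{h(z)}$, where $T^{\textrm{R}}_z = (1-x_z)W_z / r_{r(z),h(z)}$ depends on $x_z$.

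Next I would observe the monotonicity structure: $T^{\textrm{C}}_{z,r(z)}$ is nondecreasing in $x_z$ (the processing term grows, the queuing/transmission term is unaffected), while $T^{\textrm{C}}_{z,h(z)}$ is nonincreasing in $x_z$ (both the forwarding term $T^{\textrm{R}}_z$ and the processing term $\chi W_z(1-x_z)/C_{h(z)}$ shrink). Hence $T^{\textrm{service}}_z = \max\{T^{\textrm{C}}_{z,r(z)}, T^{\textrm{C}}_{z,h(z)}\}$ is a pointwise maximum of a nondecreasing and a nonincreasing function, so it is quasiconvex and its unconstrained minimizer is attained either at a crossing point where the two branches are equal, or at an endpoint of $[0,1]$ when the two curves do not cross inside the interval. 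This is exactly why the lemma's answer has the form $\min\{\max\{0,\hat x_z\},1\}$: $\hat x_z$ is the unconstrained optimizer and the outer clipping enforces feasibility \eqref{cc3}.

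To pin down $\hat x_z$ I would solve $T^{\textrm{C}}_{z,r(z)} = T^{\textrm{C}}_{z,h(z)}$ for $x_z$, but carefully, because each side contains a $\max$. There are two regimes depending on whether, at the balancing point, the helper RSU's start time is governed by its queue ($T^{\textrm{Q}}_{z,h(z)}$) or by the arrival of the forwarded data ($T^{\textrm{T}}_z + T^{\textrm{R}}_z$); symmetrically one checks the receiver side, but under the natural assumption $\max\{T^{\textrm{Q}}_{z,r(z)}, T^{\textrm{T}}_z\}$ is the active receiver start time. In the first regime $T^{\textrm{R}}_z$ drops out of the helper delay and the crossing equation is linear in $x_z$, giving the first branch of Eq. \eqref{eq.tp} with denominator $\chi W_z/C_{r(z)} + \chi W_z/C_{h(z)}$; in the second regime $T^{\textrm{R}}_z$ stays and contributes the extra $W_z/r_{r(z),h(z)}$ terms in the second branch. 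The case split is governed by comparing $T^{\textrm{Q}}_{z,h(z)}$ against $T^{\textrm{T}}_z + T^{\textrm{R}}_z$ evaluated at the candidate solution, which after rearrangement becomes the stated inequality $T^{\textrm{Q}}_{z,h(z)} - \max\{T^{\textrm{Q}}_{z,r(z)}, T^{\textrm{T}}_z\} \geq \chi W_z/C_{r(z)} - \chi R_{r(z),h(z)}(T^{\textrm{Q}}_{z,h(z)} - T^{\textrm{T}}_z)(1/C_{r(z)} + 1/C_{h(z)})$ (here $R_{r(z),h(z)}$ abbreviates $r_{r(z),h(z)}$).

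\textbf{Main obstacle.} The routine part is solving two linear equations; the delicate part is the case analysis around the nested $\max$ operators — verifying that the inequality appearing in Eq. \eqref{eq.tp} is precisely the condition under which the relevant $\max$ selects one argument over the other \emph{at the crossing point}, and confirming that in each regime the crossing point indeed lies where that regime's assumption is self-consistent (so the two branches partition the parameter space without gaps or overlaps). I would also need to handle the boundary behavior: when the two monotone curves fail to intersect in $[0,1]$, argue that $\hat x_z$ computed from the formula falls outside $[0,1]$ on the correct side, so that the clip $\min\{\max\{0,\hat x_z\},1\}$ returns the true constrained optimum (the endpoint). This endpoint-consistency check, rather than the algebra, is where the proof needs the most care.
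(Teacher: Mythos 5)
Your proposal is correct and follows essentially the same route as the paper's proof: reduce to minimizing $\max\{T^{\textrm{C}}_{z,r(z)},T^{\textrm{C}}_{z,h(z)}\}$ over $x_z\in[0,1]$, take the interior optimum at the balance point $T^{\textrm{C}}_{z,r(z)}=T^{\textrm{C}}_{z,h(z)}$ (with the two branches of Eq.~\eqref{eq.tp} arising from which argument of the helper-side $\max$ is active), and clip to the endpoints otherwise. Your explicit monotonicity/quasiconvexity argument is a slightly more careful justification of the balancing and clipping steps that the paper simply asserts, but it is the same proof in substance.
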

\begin{proof}
Without considering the tasks queued later, the service time of task $z$ can be minimized by solving the following problem:
\begin{align}
\min & \max \{T^{\textrm{C}}_{z,r(z)}, T^{\textrm{C}}_{z,h(z)}\} & \textrm{  s.t. } & \eqref{cc3}.
\end{align}
Given that $0<x_z<1$, the optimal task partition strategy exists when $T^{\textrm{C}}_{z,r(z)} = T^{\textrm{C}}_{z,h(z)}$. The optimal task partition ratio is $x^*_z = \hat{x}_{z}$. In addition, $x^*_z = \max\{0,\hat{x}_{z}\} = 0$ when the helper RSU can fully process task $z$ in a shorter service time comparing to the queuing time in the receiver RSU, \textit{i.e.}, $\max\{T^{\textrm{Q}}_{z,r(z)},T^{\textrm{T}}_{z}\}\geq \max\{T^{\textrm{Q}}_{z,h(z)},T^{\textrm{T}}_{z}+\frac{\chi W_z}{R_{r(z),h(z)}}\}+\frac{\chi W_z}{C_{h(z)}}$. Otherwise, $x^*_z = \min\{1,\hat{x}_{z}\} = 1$, when the receiver RSU can process task $z$ by itself in a shorter service time comparing to the queuing time in the helper RSU, \textit{i.e.}, $\max\{T^{\textrm{Q}}_{z,r(z)},T^{\textrm{T}}_{z}\}\leq T^{\textrm{Q}}_{z,h(z)}-\frac{\chi W_z}{C_{r(z)}}$.
\end{proof}
Lemma \ref{le.1} shows the optimal partition ratio from the individual task perspective. However, multiple tasks could be offloaded from different zones to an RSU, where the role of the RSU could be different for those tasks. The task partition strategy for a single task could affect the computing latency for the task queued later. Therefore, we will investigate the optimality of the task partition scheme in Lemma \ref{le.1} in terms of minimizing the overall service time for all tasks $z\in \mathcal{Z}$.

\begin{lemma}
Assume that the following conditions are met:
\begin{itemize}
\item The computing capability $C_r$ is identical for all edge servers.
\item The receiver RSU and helper RSU are different for each task, \textit{i.e.}, $r(z) \neq h(z)$.
\item For the helper RSUs for all tasks, the queuing time is not shorter than the offloading time, \textit{i.e.}, $T_{z,h(z)}^Q \geq T^{T}_{z,r(z)}+ T^{R}_{r(z),h(z)}, \forall z,r$.
\end{itemize}
Then, given the execution order of tasks, the optimal solution of Problem \eqref{problem2} follows the results shown in Lemma \ref{le.1}, \textit{i.e.}, $x_z^*=\min\{\max\{0, \hat{x}_z\},1\}, \forall z$.
\label{le.2}
\end{lemma}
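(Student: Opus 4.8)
The plan is to reduce Problem~\eqref{problem2}, for a fixed execution order, to the single‑task problems solved in Lemma~\ref{le.1}, and then show that the per‑task optimizers can be realized simultaneously; the three assumptions enter precisely to keep the partition ratio of a task from coupling adversely with the service times of the tasks scheduled after it. \textbf{Step 1 (collapse the delays).} I would first specialize $T^{\textrm{C}}_{z,r(z)}$ and $T^{\textrm{C}}_{z,h(z)}$ under the assumptions. Since $C_r\equiv C$, the processing terms are $\chi W_z x_z/C$ and $\chi W_z(1-x_z)/C$; since $r(z)\neq h(z)$ the receiver never plays the helper role, so $T^{\textrm{C}}_{z,r(z)}=\max\{T^{\textrm{T}}_{z},T^{\textrm{Q}}_{z,r(z)}\}+\chi W_z x_z/C$, which is affine and nondecreasing in $x_z$ with slope $\chi W_z/C$; and since the third assumption gives $T^{\textrm{Q}}_{z,h(z)}\ge T^{\textrm{T}}_{z}+T^{\textrm{R}}_{z}$ for \emph{every} feasible $x_z$, the offloading/forwarding term at the helper is always dominated, so $T^{\textrm{C}}_{z,h(z)}=T^{\textrm{Q}}_{z,h(z)}+\chi W_z(1-x_z)/C$, affine and nonincreasing in $x_z$ with slope $-\chi W_z/C$. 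Hence $T^{\textrm{service}}_z=\max\{T^{\textrm{C}}_{z,r(z)},T^{\textrm{C}}_{z,h(z)}\}$ is convex and piecewise linear in $x_z$ with a \emph{unique} minimizer, which is exactly $x_z^*=\min\{\max\{0,\hat x_z\},1\}$ of Lemma~\ref{le.1}; I would also record the closed form of the minimum value $g_z:=T^{\textrm{service}}_z(x_z^*)$ (the larger of $\max\{T^{\textrm{T}}_z,T^{\textrm{Q}}_{z,r(z)}\}$, $T^{\textrm{Q}}_{z,h(z)}$, and $\tfrac12(\max\{T^{\textrm{T}}_z,T^{\textrm{Q}}_{z,r(z)}\}+T^{\textrm{Q}}_{z,h(z)}+\chi W_z/C)$) and note that $g_z$ is nondecreasing in each of the two queuing times and, on its middle branch, invariant under equal and opposite perturbations of them.

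\textbf{Step 2 (peel off the last task).} Because $T^{\textrm{C}}_{z,r}$ depends only on the tasks that precede $z$ in the queue of server $r$ and on their ratios, I would induct on $|\mathcal{Z}|$, removing a task $z_n$ that is scheduled last in both queues containing it. Its ratio enters no other task's queuing delay and no other task's processing delay, so by Lemma~\ref{le.1} and the uniqueness from Step~1 any optimum must set $x_{z_n}=x_{z_n}^*$; deleting $z_n$ from the queues leaves the service times of all remaining tasks unchanged and yields a strictly smaller instance to which the induction hypothesis applies. Moreover the Lemma~\ref{le.1} values $x_{z_i}^*$, $i<n$, coincide in the two instances, so the combined solution is exactly the Lemma~\ref{le.1} solution of the full instance.

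\textbf{Step 3 (close the coupling — the hard part).} The gap in Step~2 is that, once $x_{z_n}$ is optimized out, the attainable minimum $g_{z_n}$ still depends on the upstream ratios through $T^{\textrm{Q}}_{z_n,r(z_n)}$ and $T^{\textrm{Q}}_{z_n,h(z_n)}$, so minimizing the sum of the earlier service times need not minimize that sum plus $g_{z_n}$. This is where all three assumptions are genuinely used: with a common $C$, perturbing an upstream ratio $x_{z'}$ by $\delta$ shifts \emph{equal} amounts $\chi W_{z'}\delta/C$ of processing time off one server of $z'$ and onto the other; with the helper queues never idle (third assumption) such a shift propagates down each server's chain without being absorbed or amplified by the $\max$ operators except where a downstream task is already slack‑free; and with $r(z')\neq h(z')$ the two shifts land on genuinely different timelines. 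Combining these with the monotonicity and the equal‑and‑opposite invariance of the $g$'s from Step~1, I would argue that moving any upstream ratio toward its Lemma~\ref{le.1} value never increases $\sum_{z}T^{\textrm{service}}_{z}$, which closes the induction. The main obstacle is precisely this bookkeeping: tracking how one perturbation travels simultaneously down the receiver chain and the helper chain of the perturbed task and then into every downstream task's service time, and verifying that the net change is nonpositive once the per‑task optimizer of Lemma~\ref{le.1} is in force.
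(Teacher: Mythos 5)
Your Steps 1 and 2 are sound as far as they go, but the proof is incomplete exactly where you flag it: Step 3 is not an argument, it is a restatement of the lemma's actual content. The whole difficulty is that an upstream ratio $x_{z'}$ changes the queuing times $T^{\textrm{Q}}_{z,r(z)}$ and $T^{\textrm{Q}}_{z,h(z)}$ of every task queued after $z'$ on two different servers, so the per-task optimum of Lemma \ref{le.1} could in principle be beaten by sacrificing one task to speed up its successors, and you never show this cannot happen. The paper closes exactly this gap with a quantitative perturbation argument: if the share of a task (say task $1$) on its receiver RSU is perturbed by $\Delta x$, its own service time grows by $\Delta t = \Delta x/C$ (this is where identical capacities enter), while any task queued immediately after it can convert at most a fraction $C/(C+C)=1/2$ of the freed time into a service-time reduction, because that task's own work is again split between two equal-capacity servers; the paper verifies this bound in three cases ($r(1)=r(2)$ with $\hat{x}_2<1$, $r(1)=r(2)$ with $\hat{x}_2=1$, and $r(1)=h(2)$), and the third assumption $T^{\textrm{Q}}_{z,h(z)}\geq T^{\textrm{T}}_{z}+T^{\textrm{R}}_{z}$ is what keeps the helper-side $\max$ pinned at the queuing time so those case computations hold. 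Propagating down the queue, the total downstream gain is bounded by $\Delta t\,(1/2+1/4+\cdots)<\Delta t$, so any deviation from $x_z^{*}$ increases the overall service time. That geometric-series bookkeeping is precisely the missing piece your Step 3 defers, and without it the lemma is unproved.

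Separately, the induction skeleton of Step 2 has a structural flaw: it requires a task scheduled last in \emph{both} of its queues, which a given execution order need not provide (two tasks $A,B$ each split over servers $1,2$, with order $A,B$ on server $1$ and $B,A$ on server $2$, leave no such task). Even when such a task $z_n$ exists, peeling it off does not decouple the problem, since the remaining ratios still enter its optimal value $g_{z_n}$ through the queuing times, which is again the Step 3 coupling. Once you have the perturbation estimate above, the induction is unnecessary; without it, the induction cannot be closed.
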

\begin{proof}
See Appendix A.
\end{proof}

We have proved that, given the task execution order, the partition ratio in Lemma 1 is the optimal solution for Problem \eqref{problem2} under certain assumptions. Next, we will explore the optimal scheduling order given the workload allocation policy. 

\begin{lemma}
Consider only one available RSU in the system, \textit{i.e.}, $r(z) = h(z)$. Under the assumption in which the offloading time is proportional to the size of the task, the optimal task execution order is to schedule the task with the shortest service time first.
\label{le.3}
\end{lemma}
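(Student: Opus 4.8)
The plan is to reduce the single‑RSU case to classical single‑machine scheduling and prove the shortest‑service‑time‑first (SPT) rule optimal by an adjacent‑pair exchange (``bubble‑sort'') argument. First I would observe that when $r(z)=h(z)$ the partition variable $x_z$ and the inter‑RSU forwarding delay $T^{\textrm{R}}_z$ drop out of \eqref{eq.ser}, so $T^{\textrm{service}}_z=T^{\textrm{C}}_{z,r(z)}$, and that along a chosen order $z_1,\dots,z_n$ of the tasks on that RSU the computing delays satisfy the recursion $T^{\textrm{C}}_{z_k,r}=\max\{T^{\textrm{T}}_{z_k},\,T^{\textrm{C}}_{z_{k-1},r}\}+T^{\textrm{P}}_{z_k,r}$ with $T^{\textrm{C}}_{z_0,r}:=T^{\textrm{Q0}}_{r}$; thus $\sum_z T^{\textrm{service}}_z$ is a sum of ``completion times'' of a single‑machine problem in which $T^{\textrm{T}}_{z_k}$ plays the role of a release date. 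By the stated hypothesis $T^{\textrm{T}}_z$ is proportional to $W_z$, and $T^{\textrm{P}}_{z,r}=\chi W_z/C_r$ is also proportional to $W_z$, so ordering tasks by increasing (queue‑free) service time, by increasing $W_z$, by increasing $T^{\textrm{T}}_z$, and by increasing $T^{\textrm{P}}_{z,r}$ are all the same order. This agreeability between release dates and processing times is exactly what rescues SPT (recall that $1\,|\,r_j\,|\,\sum C_j$ is NP‑hard without it).

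Second, I would carry out the exchange step. Fix any order and any two tasks $i,j$ scheduled adjacently with the ``wrong'' relative order, say $W_i<W_j$ but $j$ immediately before $i$; let $S\ge 0$ be the computing delay of the task just before this pair ($S=T^{\textrm{Q0}}_r$ if the pair is first), and abbreviate $a_\ell=T^{\textrm{T}}_{z_\ell}$, $p_\ell=T^{\textrm{P}}_{z_\ell,r}$. The two orders produce $C_j=\max\{a_j,S\}+p_j$, $C_i=\max\{a_i,C_j\}+p_i$ versus $C_i'=\max\{a_i,S\}+p_i$, $C_j'=\max\{a_j,C_i'\}+p_j$. I would then establish two inequalities: (i) the ``hand‑off'' bound $C_i'\le C_j$, immediate from $a_i\le a_j$ and $p_i\le p_j$, so that — since the recursion above is nondecreasing in its predecessor term — every downstream computing delay can only shrink after the swap; and (ii) $C_i'+C_j'\le C_j+C_i$. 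For (ii), using $C_j\ge a_j>a_i$ one rewrites the right side as $2\max\{a_j,S\}+p_i+p_j$, and a two‑way split on whether $a_j\ge C_i'$ collapses the left side to either $\max\{a_i,S\}+a_j+p_i+p_j$ or $2\max\{a_i,S\}+p_i+p_j$; in both sub‑cases $a_i\le a_j$ (together with $S\le\max\{a_j,S\}$ and $p_i\le p_j$) finishes the bound. Hence swapping the out‑of‑order pair never increases $\sum_z T^{\textrm{service}}_z$.

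Third, I would conclude by the usual termination argument: starting from an arbitrary — in particular, an optimal — order and repeatedly swapping an adjacent inversion, each swap strictly lowers the number of inversions, so after finitely many swaps the order is sorted by increasing $W_z$ (equivalently, by increasing service time), with total cost no larger than at the start; therefore SPT is optimal.

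The step I expect to be the main obstacle is (ii) together with the interplay of the $\max$ operators: because $T^{\textrm{T}}_z$ behaves like a release date, this is not the textbook SPT proof for $1\,||\,\sum C_j$ but the more delicate ``agreeable release dates'' variant, and it is precisely here that the proportionality hypothesis is used — without it neither $C_i'\le C_j$ nor $C_i'+C_j'\le C_j+C_i$ need hold, and the claim itself would fail.
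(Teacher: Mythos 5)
Your proof is correct and takes essentially the same route as the paper: an adjacent-pair exchange argument in which the proportionality hypothesis makes shorter service time imply shorter offloading and processing times, the paper arguing from the shortest-first order that a swap increases the pair's total completion time (its three cases on $T_{2,r}^{\textrm{T}}$ correspond to your split on the $\max$ operators), while you additionally make explicit the downstream and bubble-sort termination steps the paper leaves implicit. One small touch-up: the downstream tasks are unaffected because the pair's final completion time does not increase, i.e.\ $C_j'\le C_i$, and this needs the same two-way split on whether $a_j\ge C_i'$ that you use in step (ii), not the bound $C_i'\le C_j$ alone.
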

\begin{proof}
See Appendix B.
\end{proof}
According to the properties provided in Lemmas \ref{le.1}-\ref{le.3}, we design a heuristic algorithm to schedule the task execution order and allocate workload among RSUs. The full algorithm is presented in Algorithm \ref{al.1}. In the algorithm, we allocate the task that has the shortest service time first. For each task, we divide the workload between the receiver RSU and helper RSU according to the optimal partition ratio in Lemma 1. In the worst case, in which all zones have tasks to offload in a time slot, the algorithm requires $|\mathcal{Z}|(|\mathcal{Z}|+1)/2$ iterations to compute the task partition and scheduling results, which can still provide fast responses in the dynamic environment.

\begin{algorithm}[!htb]
  \caption{Task Partition and Scheduling Algorithm (TPSA)}
   \label{al.1}
   {\begin{algorithmic}[1]
\State {At time slot $t$, initialize set $\mathcal{S} = \{z| W_{z,t} \neq 0\}$.}
\State {Initialize $\psi_{r} = T_{r,t}^{\textrm{Q0}}$, $\mathbf{I}^{(r,t)} = \textbf{0}$, and $j_r = 1, \forall r$.}
\While {$|\mathcal{S}| \neq 0$}
\State {Initialize $Q_z = 0, \forall z \in {\mathcal{S}}$.}
\For {Task $z = 1: |\mathcal{S}|$}
\State \multiline{Update $r(z) = \{r|\alpha_{z,r,t} = 1\}$ and $h(z) = \{r|\beta_{z,r,t} = 1\}$.}
\State \multiline{Update partition ratio $x_z = \min\{\max\{0, \hat{x}_z\},1\},$ where $\hat{x}$ is obtained by~\eqref{eq.tp}.}
\State {Update $\hat{\psi}_{z,r(z)} = \psi_{r(z)} + {T^{\textrm{C}}_{z,r(z)}}$.}
\State {Update $\hat{\psi}_{z,h(z)} = \psi_{h(z)} + {T^{\textrm{C}}_{z,h(z)}}$.}
\State {If {$x_z = 1$}, then $Q_{z} = \hat{\psi}_{z,h(z)}$.}
\State {If {$x_z = 0$}, then $Q_{z} = \hat{\psi}_{z,r(z)}$.}
\State {If {$0<x_z <1$}, then $Q_{z} = (\hat{\psi}_{z,r(z)}+\hat{\psi}_{z,h(z)})/2$.}
\EndFor
\State {Find $z^* = \textrm{argmin}_z Q_z$.}
\State {Update $\psi_{r(z^*)} = \hat{\psi}_{z^*,r(z^*)}$ and $\psi_{h(z^*)} = \hat{\psi}_{z^*,h(z^*)}$.}
\State {Update order matrix $I^{r(z^*),t}_{z^*,j_{r(z^*)}} = 1$, and $I^{h(z^*),t}_{z^*,j_{h(z^*)}} = 1$.}
\State {Update $j_{r(z^*)} = j_{r(z^*)}+1$, and $j_{h(z^*)} = j_{h(z^*)}+1$.}
\State {$\mathcal{S} = \mathcal{S}	\backslash \{z^*\}$.}
\EndWhile
\State {$T_{r,t+1}^{\textrm{Q0}} = \psi_{r}-\epsilon, \forall r$.}
    \end{algorithmic}}
\end{algorithm}

\section{AI-Based Collaborative Computing Approach}
\begin{figure*}[t]  
  \centering  
  \includegraphics[width=160mm]{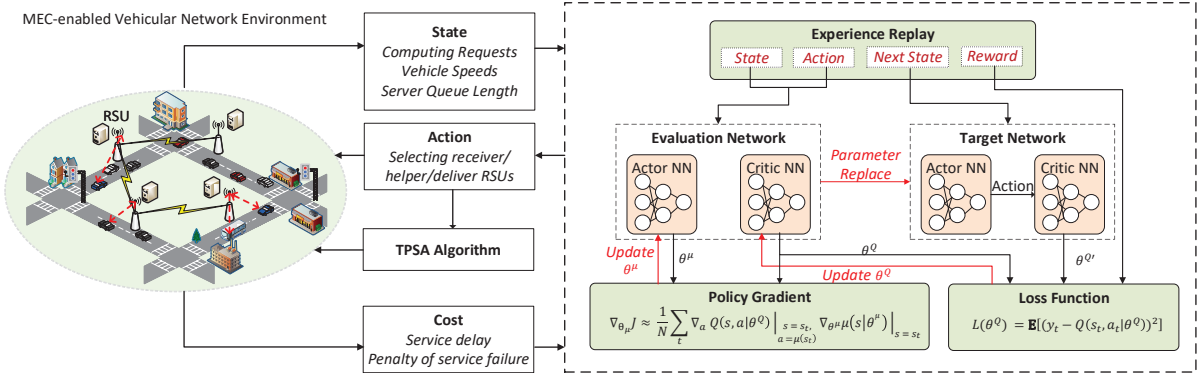}\\
  \caption{AI-based collaborative computing approach.}  
  \label{fig:ddpg1} 
\end{figure*}
To deal with the server selection problem, we utilize a DRL technique to conduct the complex decision-making problem in a dynamic environment. To implement the DRL method, we first re-formulate the problem into an MDP. An MDP can be defined by a tuple $(\mathbb{S}, \mathbb{A}, \mathbb{T}, \mathbb{C})$, where $\mathbb{S}$ represents the set of system states; $\mathbb{A}$ represents the set of actions; $\mathbb{T} = \{p(s_{t+1}|s_t,a_t)\}$ is the set of transition probabilities; and $\mathbb{C}$ is the set of real-value cost functions. The term $C(s,a)$ represents the cost when the system is at state $s\in \mathbb{S}$ and an action $a \in \mathbb{A}$ is taken. A policy $\pi$ represents a mapping from $\mathbb{S}$ to $\mathbb{A}$. 
In our problem, the state space, action space, and cost model in an MDP are summarized as follows:
\begin{enumerate}
\item State space: In time slot $t$, the network state, $s_t$, includes the computing data amount in zones, \textit{i.e.}, $\{W_{z,t}, \forall z\}$, the average vehicle speed, \textit{i.e.}, $\{v_{z,t}, \forall z\}$, and the delay for edge servers to finish the tasks offloaded in previous time slots $\{1, \dots, t-1\}$, \textit{i.e.}, $\{T^{\textrm{Q0}}_{r,t}, \forall r\}$.
\item Action space: For zone $z$ and time slot $t$, the action taken by the network includes three elements: the index of receiver RSU, helper RSU, and deliver RSU, which can be represented by $\{a_{z,t}^1, a_{z,t}^2, a_{z,t}^3\}$, receptively. 
\item Cost model: Given the state-action pair, the overall service time can be available by the TPSA algorithm. Thus, according to the objective function \eqref{problem}, the cost function can be formulated as 
\begin{equation}
C(s_t,a_t) =\sum_{z \in \mathcal{Z}}\Big\{T^{\textrm{service}}_{z,t}\textbf{1}_{z,t}+ \lambda W_{z,t}(1-\textbf{1}_{z,t})\Big\}.
\label{eq.cost_mdp}
\end{equation}  
Then, to obtain the expected long-term discounted cost, the value function $V$ of state $s$ is
\begin{equation}
V(s, \pi) = \mathbb{E}\Big[\sum_{t=0}^\infty \gamma^tC(s_t,a_t)|s_0 = s,\pi \Big],
\end{equation}
where the parameter $\gamma$ is a discount factor. 
%The value function represents the expected costs of all possible state trajectories starting from $s$.
By minimizing the value function of each state, we can obtain the optimal offloading and computing policy $\pi^*$; that is,
\begin{equation}
\pi^*(s) = \textrm{arg\,min}_a \sum_{s'}p(s'|s,a)[C(s,a)+\gamma V(s',\pi^*)].
\label{eq.MDP}
\end{equation}
\end{enumerate}
Due to the limited knowledge on transition probability between the states and the sizeable state-action space in the network, the traditional dynamic programming is not able to find the optimal policy efficiently. Therefore, we adopt DRL to solve the proposed server selection problem. There are three common DRL algorithms: deep Q network (DQN), actor-critic (AC), and DDPG. DQN is a powerful tool to obtain the optimal policy with a high dimension in the state space. Besides an online neural network (evaluation network) to learn the Q value, a frozen network (target network) and the experience replay technique are applied to stabilize the learning process. However, the method shows the inefficiency on the network with a high dimension in the action space, while in our problem, the large number of zones leads the high dimension in both state and action spaces.
On the other hand, both AC and DDPG tackle the problem with a high action dimension by the policy gradient technique. Two networks, \textit{i.e.}, actor and critic networks, are adopted, in which the critic evaluates the Q value, and the actor updates policy parameters in the direction suggested by the critic. Moreover, DDPG combines the characteristics of DQN on top of the AC algorithm to learning the Q value and the deterministic policy by the experience relay and the frozen network, thereby helping reach the fast convergence \cite{ddpg}. In this paper, we exploit the DDPG algorithm to obtain the optimal collaborative computing policy in vehicular networks.

The illustration of our AI-based collaborative computing approach is shown in Fig. \ref{fig:ddpg1}. The system states are observed from the MEC-enabled vehicular network. After state $s_t$ is obtained, the optimal server selection policy can be computed by the DDPG algorithm. According to the server selection results,  the corresponding task partition and scheduling policy can be obtained by the proposed TPSA algorithm. Then, the cost of the corresponding state-action pair and the next system state can be observed from the environment. The state transition set $(s_t,a_t,r_t,s_{t+1})$ is stored in the replay memory for training the neural networks. In DDPG, four neural networks are employed. Two of the four networks are evaluation networks, where the weights are updated when the neural network is trained, and the other two networks are target networks, where the weights are replaced periodically from the evaluation network. For both evaluation and target networks, two neural networks, \textit{i.e.}, actor and critic networks, are adopted to evaluate the optimal policy and Q value, respectively. The weights in evaluation and target critic networks are denoted by $\theta^Q$ and $\theta^{Q'}$, and the weights in evaluation and target actor networks are denoted by $\theta^{\mu}$ and $\theta^{\mu '}$, respectively.

In each training step, a batch of experience tuples are extracted from the experience replay memory, where the number of tuples in a mini-batch is denoted by $N$. The critics in both evaluation and target networks approximate the value function and compute the loss function $L$, where 
\begin{equation}
{L(\theta^Q) = \mathbf{E}\Big[\big(y_t-Q(s_t,a_t|\theta^Q)\big)^2\Big].} \label{ddpg1}
\end{equation}
{The term $Q(s_t,a_t|\theta^Q)$ represents the Q function approximated by the evaluation network.} The value of $y_t$ is obtained from the value function approximated by the target network, where
\begin{equation}
{y_t = C(s_t,a_t)+\gamma Q(s_{t+1},\mu ' (s_{t+1}| \theta^{\mu '})|\theta^{Q'}).}
\label{ddpg2}
\end{equation}
The term $\mu ' (s_{t+1}| \theta^{\mu '})$ represents the action taken at $s_{t+1}$ given by the target actor network. By minimizing the loss function \eqref{ddpg1}, the weights in the evaluation critic, \textit{i.e.}, $\theta^Q$, can be updated. On the other hand, to update the weights of the evaluation actor network, the policy gradient can be represented as   
\begin{align}
{\nabla_{\theta_{\mu}}\! J \! \approx \! \frac{1}{N}\! \sum_t \nabla_a Q(s,a|\theta^Q)|_{\begin{subarray}{l} s = s_t,\\  a = \mu (s_t)\end{subarray}}\nabla_{\theta^{\mu}}\mu (s|\theta^{\mu})|_{s = s_t}.}
\label{ddpg3}
\end{align}
From \eqref{ddpg3}, it can be seen that actor weights are updated in each training step according to the direction suggested by the critic.

\begin{figure*}[t]  
  \centering  
  \includegraphics[width=120mm]{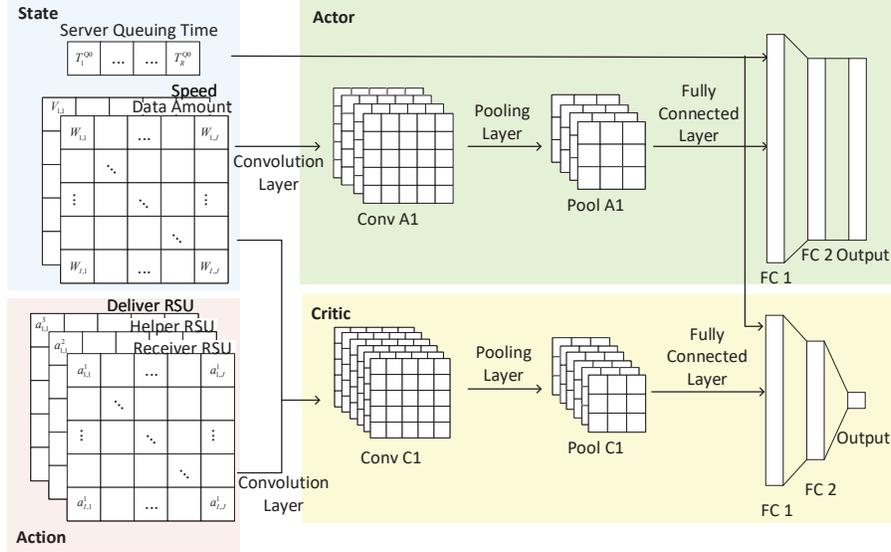}\\
  \caption{The structure of actor and critic neural network. }  
  \label{fig:ddpg2} 
\end{figure*}

Although the DDPG algorithm is able to tackle the problem with a high dimension of state and action spaces, it is inefficient to apply the DDPG algorithm directly in our problem due to the 2-dimensional transportation network and the multiple dimensions of the input.  A huge number of neurons in a network will be deployed if the conventional neural network with fully connected layers is adopted. To improve the algorithm efficiency, we utilize CNN in both actor and critic networks to exploit the correlation of states and actions among different zones. The structure of actor and critic networks is shown in Fig. \ref{fig:ddpg2}. Before fully connected layers, convolution layers and pooling layers are applied to learn the relevant features of the inputs among zones. 
Due to the weight sharing feature of CNN filters, the number of training parameters can be significantly reduced compared to the network with fully connected layers \cite{CNN}. After several convolution and pooling layers, the output of the CNN combines the state of edge servers and forwards to fully connected layers.

The proposed AI-based collaborative computing approach is provided in Algorithm \ref{al}, where $\tau$ is a small number less than 1.
In our algorithm, to learn the environment efficiently, the system will continuously train the parameter by $N_t$ times after $N_e$ time step, where $N_e>N_t$.

\begin{algorithm}[!htb]
  \caption{AI-based Collaborative Computing Approach}
   \label{al}
   {\begin{algorithmic}[1]
     \State{Initialize critic network $Q(s_0,a_0|\theta^{Q})$ and actor network $\mu(s_0|\theta^{\mu})$ with weights $\theta^{Q}$ and $\theta^{\mu}$. }
     \State{Initialize target network with weights $\theta^{Q'} = \theta^{Q}$ and $\theta^{\mu '} = \theta^{\mu}$. }
     \State {Initialize the experience replay buffer.}
     \State \multiline{Initialize a random vector $\mathcal{N}$ as the noise for action
exploration.}
     \For{ episode = 1:G}
\State{Initialize environment, and observe the initial state $s_0$.}
\For{ time slot $t = 1:T$}
\State{Select action $a_t = \mu (s|\theta^{\mu})+\mathcal{N}$.}
\State \multiline{Let $\alpha_{z, a^1_{z,t},t}$, $\beta_{z, a^2_{z,t},t}$, and $\gamma_{z, a^3_{z,t},t}$ equal to 1.} 
\State \multiline{Compute the task partition and scheduling results by Algorithm 1.}
\State{Observe next state $s_{t+1}$ and cost $C(s_t,a_t)$.}
\State \multiline{Store transition $(s_t,a_t,r_t,s_{t+1})$ into the experience replay buffer. Delete the oldest transition set if the buffer is full.}
\If{ $k \mod N_e == 0$}
\For{ $j = 1:N_t$}
\State \multiline{Sample a mini-batch of $N$ samples. }
\State{Update $y_t$ by \eqref{ddpg2}.}
\State \multiline{Update the weights in the evaluation critic network by minimizing the loss in \eqref{ddpg1}.}
\State \multiline{Update the weights in the evaluation actor network using sampled policy gradient presented in \eqref{ddpg3}.}
\State \multiline{Update target networks: $\theta^{Q'} = \tau \theta^{Q}+(1-\tau) \theta^{Q'}$; $\theta^{\mu '} = \tau \theta^{\mu}+(1-\tau) \theta^{\mu '}$.}
\EndFor
\EndIf
\EndFor
\EndFor
    \end{algorithmic}}
\end{algorithm}

\section{Performance Evaluation}
{In this section, we first present the efficiency of the proposed TPSA algorithm in task partition and scheduling. Then, we evaluate the performance of the proposed AI-based collaborative computing approach in a vehicular network simulated by VISSIM \cite{vissim}, where TPSA is applied to schedule computing tasks according to the policy given by the DDPG algorithm. 
}
\subsection{Task Partition and Scheduling Algorithm}
We first evaluate the performance of the proposed TPSA algorithm. In the simulation, we consider that tasks can be offloaded to five edge servers with an identical offloading rate of 6 Mbits/s. The communication rate among the servers is 8 Mbits/s. We set that the computing capability of the servers is 8 GC/s, and the number of computation cycles needed for processing 1 Mbit is 4 GC. The computing data amount of tasks is uniformly distributed in the range of [1,21] Mbits. For each task, the receiver and helper RSUs are randomly selected from the five servers. We compare the proposed TPSA algorithm with \textit{brute-force} and \textit{random} schemes. In the brute-force scheme, we utilize an exhaustive search for finding the optimal scheduling order. In the random scheme, we randomly assign the scheduling order of the tasks. Note that, for both \textit{brute-force} and \textit{random} schemes, we adopt the optimal task partition ratio in workload allocation. The simulation results presented in Figs. \ref{fig:sim2}(a) and \ref{fig:sim2}(b) are averaged over 200 rounds of Monte Carlo simulations.

The service delay performance of the proposed algorithm is shown in Fig. \ref{fig:sim2}(a). It can be seen that an increase in the task number leads to increasing overall service time, and the increasing rate of the random scheme is the highest among the three schemes. The proposed TPSA algorithm can achieve a performance very close to the brute-force scheme. Moreover, we compare the runtime between the proposed TPSA and the brute-force scheme. As shown in Fig.~\ref{fig:sim2}(b), as the number of the task increases, the runtime of brute-force scheme increases exponentially, 
%When the number of tasks is large, the brute-force scheme consumes significant computing time to search an optimal scheduling order, 
while the proposed TPSA algorithm has imperceptible runtime to compute the scheduling result that is close to the optimal one. In summary, the proposed TPSA algorithm can achieve a near-optimal performance for task partition and scheduling with low computation complexity.
\begin{figure}[t]
\centering  
  \begin{subfigure}
  \centering
  \includegraphics[width=80mm]{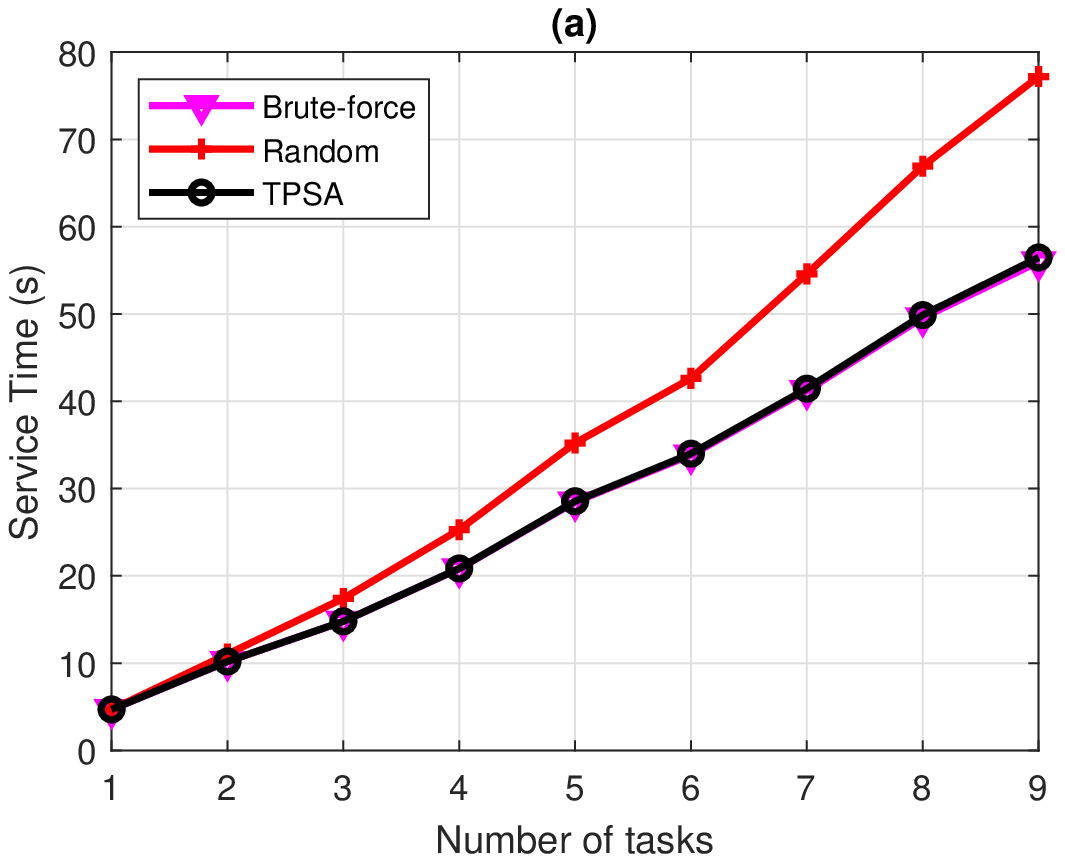}
  \end{subfigure}
  \hspace{2cm}
\begin{subfigure}
  \centering  
  \includegraphics[width=80mm]{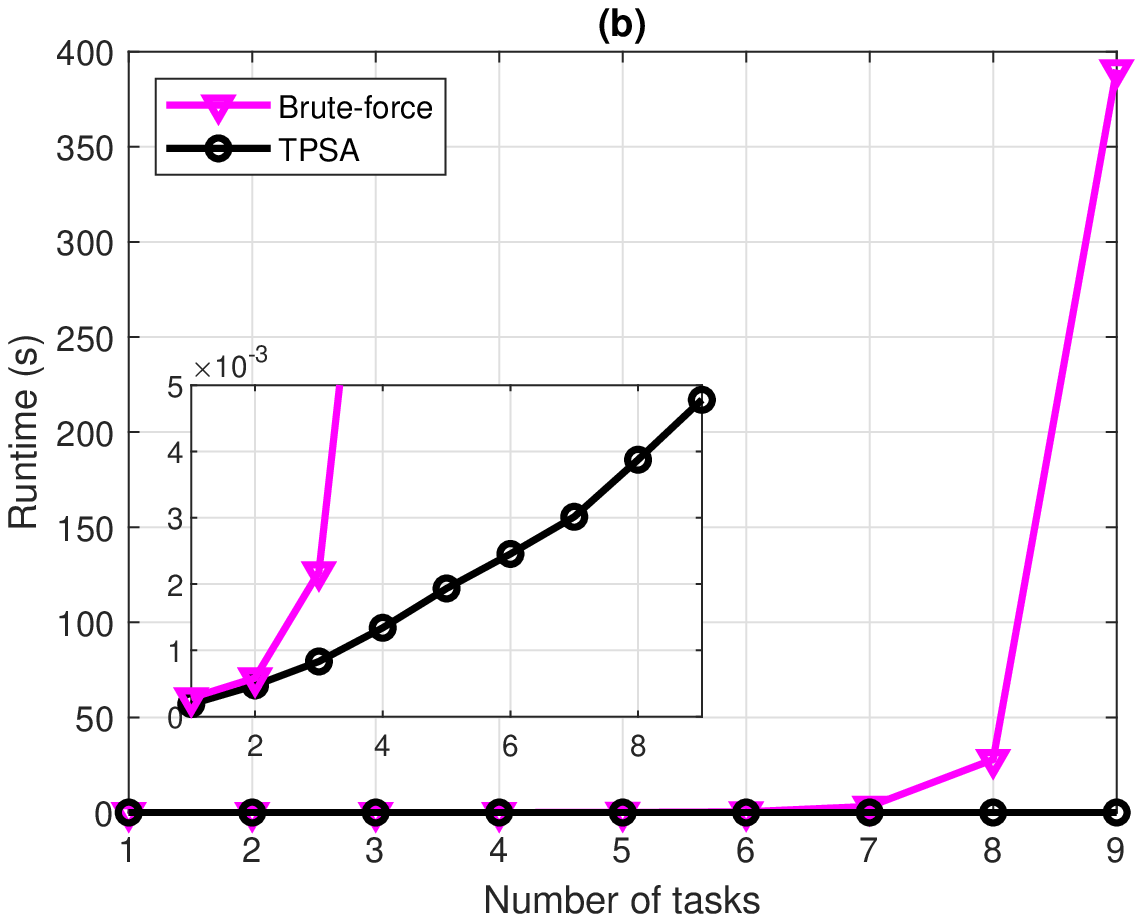}
  \end{subfigure}
  \caption{(a) Average service delay among the three task partition and scheduling schemes with respect to the number of tasks. (b) Average computation runtime among the three task partition and scheduling schemes with respect to the number of tasks.} \label{fig:sim2} 
\end{figure}

\subsection{AI-based Collaborative Computing Approach}
\begin{table}[t]
\centering
\caption{Network Parameters}
\label{t1}
\scriptsize{
\begin{tabular}{|c|c|c|c|c|}
\hline
$P^V$    & $P^R$       & $\sigma_r^2$, $\sigma_v^2$ & $\lambda$                & $\epsilon$                  \\ \hline
27 dBm   & 37 dBm      & -93 dBm    & 50                       & 1 s                         \\ \hhline{|=|=|=|=|=|}
$f$      & $\chi$      & $N_e, N_t$ & $\delta^{\textrm{O}}$ & $\delta^{\textrm{D}}$ \\ \hline
2800 MHz & 1200 C/bits & 80, 25     & 7 dB                     & 7 dB                       \\ \hline
\end{tabular}}
\end{table}

\begin{figure}[t]  
  \centering  
  \includegraphics[width=50mm]{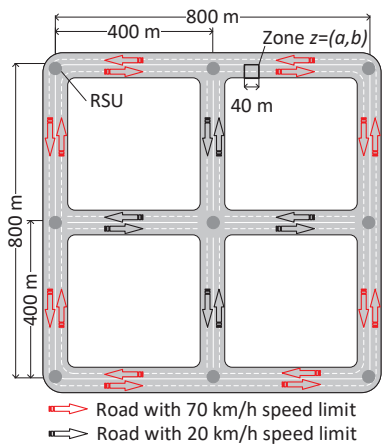}\\
  \caption{The transportation network topology for simulation.}  
  \label{fig:sim3} 
\end{figure}

In this subsection, we evaluate the performance of the proposed AI-based collaborative computing approach. In the simulation, we consider an 800 m $\times$ 800 m transportation system, where the transportation topology is shown in Fig. \ref{fig:sim3}. Nine RSUs with edge servers are deployed, as indicated in the figure. We generate vehicle traffic by VISSIM~\cite{vissim}, where 200 vehicles are traveling in the area. The speed of vehicles depends on the speed limit on the road and the distance to the vehicle ahead. For each vehicle, the computing tasks are generated using a Poisson process, and the input data amount of each task is uniformly distributed in the range of [2,5] Mbits.  The length and width of a zone are 40 m and 10 m (2 driving lanes), respectively. Other network parameter settings are presented in Table \ref{t1}. We test the system performance within a duration of 20 seconds.

\begin{table}[t]
\centering
\caption{Neural Network Structure}
\label{t2}
\scriptsize{
\begin{tabular}{|l|l|l|}
\hline
\multicolumn{3}{|c|}{Actor Network}                          \\ \hline
Layer     & Number of neurons       & Activation function    \\ \hline
CONV1     & 5$\times$1$\times$2$\times$10, stride 1      & relu                   \\ \hline
POOL1     & 2$\times$1                     & none                   \\ \hline
\multicolumn{3}{|l|}{Data Concatenation and Batch Normalization Layer} \\ \hline
FC1       & 1400                    & tanh                   \\ \hline
FC2       & 1400                    & tanh                   \\ \hline
FC3       & 5$\times A\times B$                   & tanh                   \\ \hline
\multicolumn{3}{|c|}{Critic Network}                         \\ \hline
Layer     & Number of neurons       & Activation function    \\ \hline
CONV1     & 5$\times$1$\times$2$\times$40, stride 1      & relu                   \\ \hline
POOL1     & 2$\times$1                     & none                   \\ \hline
CONV2     & 3$\times$1$\times$40$\times$10, stride 1     & relu                   \\ \hline
POOL2     & 2$\times$1                     & none                   \\ \hline
\multicolumn{3}{|l|}{Data Concatenation and Batch Normalization Layer} \\ \hline
FC1       & 640                     & relu                   \\ \hline
FC2       & 512                     & relu                   \\ \hline
FC3       & 128                     & none                   \\ \hline
FC4       & 1                       & relu                   \\ \hline
\end{tabular}}
\end{table}
The neural network structures of the DDPG algorithm are presented in Table \ref{t2}. The initial learning rates of the actor and critic networks are 1e-5 and 1e-4, respectively, and the learning rates are attenuated by 0.991 in every 500 training steps. The experience replay buffer can adopt 8,000 state-action transitions, and in each training step, the number of transition tuples selected for training, \textit{i.e.}, the batch size, is 128. We adopt a soft parameter replacement technique to update the parameters in the target network, where $\tau$ is 0.01. We compare the performance of the proposed AI-based collaborative computing approach with three approaches. In the \textit{Greedy} approach, vehicles always offload their tasks to the RSU with the highest SNR, and the received computing tasks will not be collaboratively computed with other RSUs. In the \textit{Greedy+TPSA} approach, a vehicle offload their tasks to the RSU with the highest SNR, and the RSU randomly selects another RSU to compute the task collaboratively. The task partition and scheduling policy follows the TPSA algorithm, and the computing results are delivered by the receiver RSU. In the \textit{Random+TPSA} approach, the receiver, helper, and deliver RSUs are selected randomly, and the TPSA algorithm is applied to determine the task partition ratio and the execution order.

\begin{figure}[t]  
  \centering  
  \includegraphics[width=90mm]{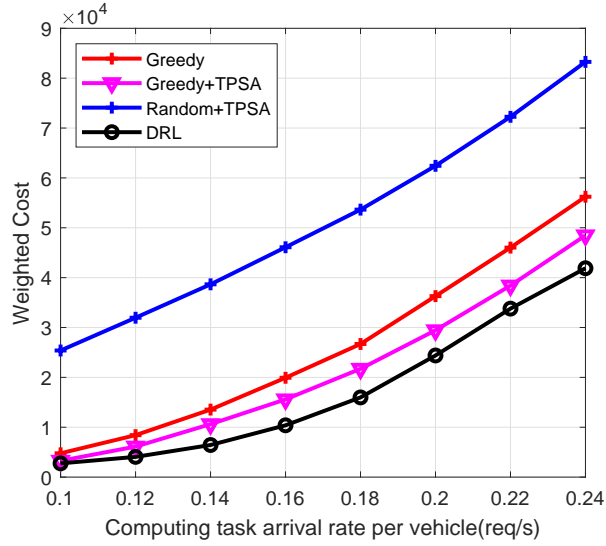}\\
  \caption{Average weighted computing delay cost versus computing task arrive
rate per vehicle.}  
  \label{fig:sim4} 
\end{figure}

The overall weighted computing cost with respect to task arrival rates is shown in Fig. \ref{fig:sim4}. Our proposed approach can achieve the lowest computing cost compared to the other three approaches. The random approach suffers the highest cost compared to others due to the inefficient server selection in the scheme. Moreover, the greedy+TPSA approach achieves a lower cost compared to the greedy approach. The reason is that parallel computing is able to reduce the overall service time, and the proposed TPSA is able to achieve near-optimal task partition and scheduling results. However, the greedy approach selects the servers according to the instantaneous cost of the network rather than the value in the long term. Therefore, the greedy+TPSA approach cannot attain a lower cost compared to the proposed AI-based approach.

\begin{figure}[t]  
  \centering  
  \includegraphics[width=90mm]{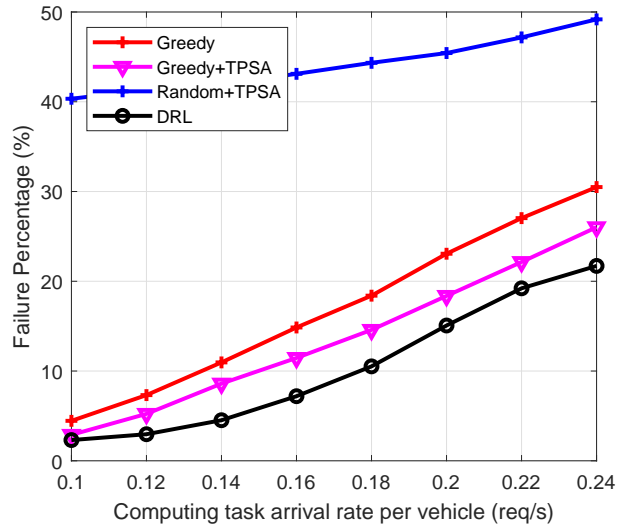}\\
  \caption{Average percentage of service failure versus computing task arrive rate per vehicle.}  
  \label{fig:sim5} 
\end{figure}
\begin{figure}[t]  
  \centering  
  \includegraphics[width=90mm]{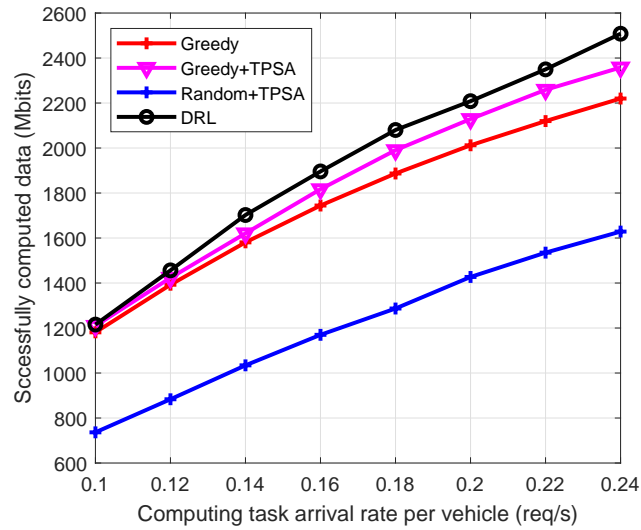}\\
  \caption{Average computing data amount which is successfully computed versus
computing task arrive rate per vehicle.}  
  \label{fig:sim6} 
\end{figure}

As indicated in Eq. \eqref{eq.cost_mdp}, the service cost consists of the service delay and the failure penalty. The results of the service failure percentage is shown in Fig. \ref{fig:sim5}. Similar to the service cost, the proposed AI-based approach achieves the lowest failure percentage among the four approaches. Correspondingly, as shown in Fig. \ref{fig:sim6}, the proposed approach can successfully process the highest amount of data among the four approaches. On the other hand, the results of the average service delay for 1 Mbits successful computed data are shown in Fig. \ref{fig:sim7}. Compared to the other three approaches, the proposed scheme reduces the service delay significantly. Furthermore, the delay of the random approach increases exponentially since less amount of data can be successfully computed when the task arrival rate is high. 

\begin{figure}[t]  
  \centering  
  \includegraphics[width=90mm]{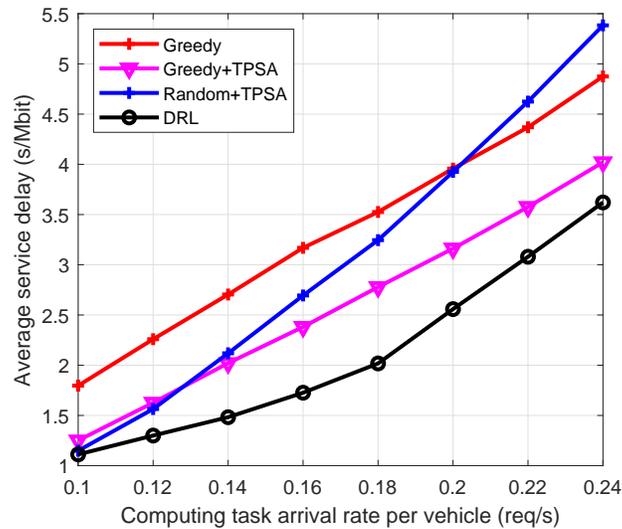}\\
  \caption{Average service delay for 1 Mbits successful computed data versus
computing task arrive rate per vehicle.}  
  \label{fig:sim7} 
\end{figure}

The convergence performance of the proposed AI-based approach is shown in Fig. \ref{fig:sim8}, where the highlighted line represents the moving average from 50 samples around the corresponding point. Note that in our algorithm, we explore multiple times in each training step. It can be seen that our approach converges after 10,000 episodes, or equivalently, after the network being trained by around 3,000 episodes, \textit{i.e.}, 60,000 training steps.
\begin{figure}[t]  
  \centering  
  \includegraphics[width=90mm]{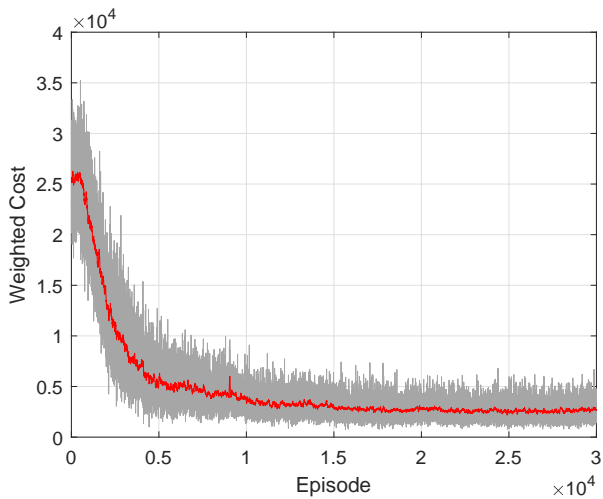}\\
  \caption{Convergence performance of the proposed algorithm, where the task arrival rate is 0.1 request/sec.}  
  \label{fig:sim8} 
\end{figure}

\section{Conclusion}
We have introduced a novel collaboration computing framework to reduce computing service latency and improve service reliability in MEC-enabled vehicular networks. 
The proposed framework addresses the challenge of maintaining computing service continuity for vehicle users with high mobility. As a result, our collaborative computing approach is able to support proactive decision making for computation offloading through learning the network dynamics.
%, the dynamics in vehicular networks can be learned, and 
%An AI-based approach has been developed to learn the network dynamics, and 
%Specifically, we first develop an efficient task partition and scheduling scheme to make the complex cooperative computing problem scalable. We further develop an AI-based collaborative computing approach to select the optimal edge servers for task offloading, computing, and result delivering. By learning network dynamics, the proposed approach supports proactive decision making on computation offloading for high-speed users. 
Our work can be applied to offer low-latency and high-reliable edge computing services to vehicle users in a complex network environment, such as urban transportation systems.
In the future, we will investigate multi-agent learning approach to compute the optimal computing strategy under the limited information collected by the edge servers.

%The first contribution of this work is the development of an efficient task partition and scheduling scheme to minimize the overall computing time for the tasks offloaded to edge servers. We further develop an AI-based collaborative computing approach to select the optimal edge servers for task offloading, computing, and result delivering. A CNN based DDPG technique is employed to tackle the complex and highly dynamic network environment. Simulation results validate the effectiveness of the proposed approach. 

\section*{Appendix A: Proof of Lemma~\ref{le.2}}
\begin{figure}[t]  
  \centering  
  \includegraphics[width=80mm]{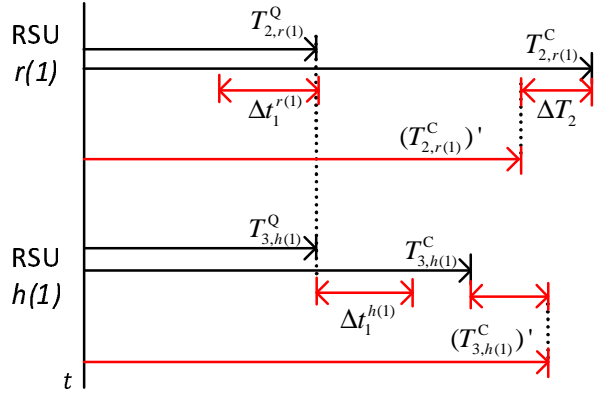}\\
  \caption{An illustration of task partition. }  
  \label{fig:app} 
\end{figure}
An illustration of task partition is shown in Fig. \ref{fig:app}. Consider that the workload of all tasks are divided and shared among RSUs following the results in Lemma \ref{le.1}. We focus on a single task which is numbered as task 1 as shown in the figure.  As indicated in the second and the third assumptions in Lemma 2, the computing load of task 1 is shared between RSU $r(1)$ and $h(1)$. Tasks 2 and 3 are scheduled after task 1 in RSU $r(1)$ and $h(1)$, respectively. In addition, $T_{2,h(2)}^Q \geq T^{T}_{2,r(2)}+ T^{R}_{r(2),h(2)}$.
We then prove that, under the assumption in Lemma \ref{le.2}, the overall service time will be increased if the partition ratio of task 1 does not follow the policy presented in Lemma 1. 

Consider that, for task 1, the workload assigned to RSU $r(1)$ is decreased by $\Delta x$. Correspondingly, the computing time of task 1 in server $r(1)$ is reduced by $\Delta t_1^{r(1)} = \Delta x/C_{r(1)}$, while the computing time of task 1 in server $h(1)$ is increased by $\Delta t_1^{h(1)} = \Delta x/C_{h(1)}$. Thus, the service time of task 1 is increased by $ \Delta T_1 = \Delta x/C_{h(1)} $. Denote the new partition ratio of task 2, after task partition ratio $x_1$ is decreased by $\Delta x$, as $\hat{x}_2$.
We then list following cases to analyze the time deduction from the tasks queued after the task 1:
\begin{itemize}
\item \textit{Case 1:} Task 2 regards RSU $r(1)$ as the receiver RSU, \textit{i.e.}, $r(1) = r(2)$, and $\hat{x}_2 < 1$. 
According to Eq. \eqref{eq.service} and Lemma 1, the optimal service time of task 2 is
\begin{align}
&T^{\textrm{service}}_2 = \max\{T^{\textrm{T}}_2, T^{\textrm{Q}}_{2,r(1)}\} \notag\\
&\hspace{1cm}+ \frac{(T^{\textrm{Q}}_{2,h(2)}-\max\{T^{\textrm{T}}_2, T^{\textrm{Q}}_{2,r(1)}\})C_{h(2)}+\chi W_2}{C_{r(1)}+C_{h(2)}}.
\label{eq.le2.1}
\end{align}
After task partition ratio $x_1$ is decreased by $\Delta x$, task 2 can be processed by RSU $r(1)$ in advance by $\Delta t_1^{r(1)}$. The new optimal service time of task 2 is 
\begin{align}
&(T^{\textrm{service}}_2)' = \max\{T^{\textrm{T}}_2, T^{\textrm{Q}}_{2,r(1)}-\Delta t_1^{r(1)}\}\!\notag\\
&\hspace{0.2cm} +\!\frac{(T^{\textrm{Q}}_{2,h(2)}-\max\{T^{\textrm{T}}_2, T^{\textrm{Q}}_{2,r(1)}-\Delta t_1^{r(1)}\})C_{h(2)}+\chi W_2}{C_{r(1)}+C_{h(2)}}.
\label{eq.le2.2}
\end{align}
The service time deduction on task 2 can be obtained by subtracting Eq. \eqref{eq.le2.1} by Eq. \eqref{eq.le2.2}. We found the reduced service time $\Delta T_2 \leq \Delta t_1^{r(1)}C_{r(1)}/(C_{r(1)}+C_{h(2)})$, where equality can be reached when $T^{\textrm{T}}_2\leq T^{\textrm{Q}}_{2,r(1)}-\Delta t_1^{r(1)}$.

\item \textit{Case 2:} Task 2 regards RSU $r(1)$ as the receiver RSU, \textit{i.e.}, $r(1) = r(2)$, and $\hat{x}_2 = 1$. In this case, the new optimal service time of task 2 is  
\begin{align}
&(T^{\textrm{service}}_2)' = \max\{T^{\textrm{T}}_2, T^{\textrm{Q}}_{2,r(1)}-\Delta t_1^{r(1)}\}+ \frac{\chi W_2}{C_{r(1)}}.
\label{eq.le2.3}
\end{align}
Via subtracting Eq. \eqref{eq.le2.1} by Eq. \eqref{eq.le2.3}, we have 
\begin{align}
\Delta T_2 &\leq \Delta t_1^{r(1)} - \frac{(\chi W_z/C_{r(1)}-T^{\textrm{Q}}_{2,h(2)}+T^{\textrm{Q}}_{2,r(1)})C_{h(2)}}{C_{r(1)}+C_{h(2)}} \notag\\
&\leq \frac{\Delta t_1^{r(1)}C_{r(1)}}{(C_{r(1)}+C_{h(2)})},
\end{align}
where equality can be achieved when $T^{\textrm{T}}_2\leq T^{\textrm{Q}}_{2,r(1)}-\Delta t_1^{r(1)}$.

\item \textit{Case 3:} Task 2 regards RSU $r(1)$ as the helper RSU, \textit{i.e.}, $r(1) = h(2)$. In this case, the new optimal service time of task 2 is  
\begin{align}
&(T^{\textrm{service}}_2)' = \max\{T^{\textrm{T}}_2, T^{\textrm{Q}}_{2,r(2)}-\Delta t_1^{r(2)}\} \notag\\
&\hspace{0.1cm}+\frac{(T^{\textrm{Q}}_{2,r(1)}-\Delta t_1^{r(2)}-\max\{T^{\textrm{T}}_2, T^{\textrm{Q}}_{2,r(2)}\})C_{r(1)}+\chi W_2}{C_{r(2)}+C_{r(1)}}.
\end{align}
Similar as case 1, the reduced service time for task 2 is $\Delta T_2 = \Delta t_1^{r(1)}C_{r(1)}/(C_{r(1)}+C_{r(2)})$.
\end{itemize}
Considering that the computing capabilities $C_{r}$ are identical for all servers (the first assumption in Lemma 2), the maximum service time deduction for task 2 is $\Delta t_1^{r(1)}/2$. For all tasks queued after task 1 in RSU $r(1)$, the overall service time deduction is less than $\Delta t_1^{r(1)}[1/2 + (1/2)^2+ (1/2)^3+\dots]$, which is always less than $\Delta t_1^{r(1)}$. We omit the proof for the case when the workload assigned in RSU $h(1)$ is decreased by $\Delta x$ due to the similarity. 
Therefore, we obtain that, under the assumptions presented in Lemma 2, the overall service time will be increased if the workload allocation does not follow the task partition ratio presented in Lemma 1.

\section*{Appendix B: Proof of Lemma~\ref{le.3}}

Suppose the tasks in edge server $r$ are scheduled by the shortest-task-first rule, and task 2 is queued after the task 1. Then, we have
\begin{align}
\max\{T_{1,r}^\textrm{Q},T_{1,r}^\textrm{T}\} +T_{1,r}^\textrm{P} \leq \max\{T_{1,r}^\textrm{Q},T_{2,r}^\textrm{T}\} +T_{2,r}^\textrm{P}. 
\label{eq.condition1}
\end{align}
If the order of task 1 and task 2 are switched with each other, the service time of task 2 will be decreased by 
\begin{equation}
D = \max\{\max\{T_{1,r}^\textrm{Q},T_{1,r}^\textrm{T}\} +T_{1,r}^\textrm{P},T_{2,r}^\textrm{T}\}-\max\{T_{1,r}^\textrm{Q},T_{2,r}^\textrm{T}\}.
\end{equation}
On the other hand, the service time of task 1 will be increased by 
\begin{equation}
I = \max\{T_{1,r}^\textrm{Q},T_{2,r}^\textrm{T}\} +T_{2,r}^\textrm{P}- \max\{T_{1,r}^\textrm{Q},T_{1,r}^\textrm{T}\}.
\end{equation}
From \eqref{eq.condition1}, we can derive that $I \geq T_{1,r}^\textrm{P}$.
Then, the overall service time of tasks 1 and 2 will be increased by
\begin{align}
&I-D \geq  T_{1,r}^\textrm{P} - \max\{\max\{T_{1,r}^\textrm{Q},T_{1,r}^\textrm{T}\} +T_{1,r}^\textrm{P},T_{2,r}^\textrm{T}\}\notag\\
&\hspace{5cm}+\max\{T_{1,r}^\textrm{Q},T_{2,r}^\textrm{T}\}.
\end{align}
We then list the three scenarios on $T_{2,r}^\textrm{T}$:
\begin{itemize}
\item \textit{Case 1:} $T_{2,r}^\textrm{T} \geq \max\{T_{1,r}^\textrm{Q},T_{1,r}^\textrm{T}\} +T_{1,r}^\textrm{P}$. In this case, $I-D \geq T_{1,r}^\textrm{P} \geq 0$.
\item \textit{Case 2:} $T_{1,r}^\textrm{Q}\leq T_{2,r}^\textrm{T} \leq \max\{T_{1,r}^\textrm{Q},T_{1,r}^\textrm{T}\} +T_{1,r}^\textrm{P}$. In this case, $I-D \geq  T_{2,r}^\textrm{T} - \max \{T_{1,r}^\textrm{Q},T_{1,r}^\textrm{T}\}$. According the assumption, where $T_{1,r}^\textrm{T} \leq T_{2,r}^\textrm{T}$,  we then have $I-D \geq 0$.
\item \textit{Case 3:} $T_{2,r}^\textrm{T} \leq T_{1,r}^\textrm{Q}$. In this case, $I-D \geq T_{1,r}^\textrm{Q} - \max \{T_{1,r}^\textrm{Q},T_{1,r}^\textrm{T}\} = 0$.
\end{itemize}
Therefore, we can obtain the conclusion that the service time will be increased if the task execution order does not follow a shortest-task-first rule under the assumptions. 
 
{\footnotesize
\bibliographystyle{IEEEbib}
\bibliography{reference}
}

\end{document}